\newcommand{\overbar}[1]{\mkern 1.5mu\overline{\mkern-1.5mu#1\mkern-1.5mu}\mkern 1.5mu}
\def\blfootnote{\xdef\@thefnmark{}\@footnotetext}
\numberwithin{equation}{section}
\newtheorem{theorem}{Theorem}[section]
\newtheorem{corollary}[theorem]{Corollary}
\newtheorem{proposition}[theorem]{Proposition}
\newtheorem{remark}[theorem]{Remark}
\theoremstyle{theorem}
\newtheorem{assumption}[theorem]{Assumption}
\newtheorem{definition}[theorem]{Definition}
\newcommand{\pp}{\mathbb{P}}
\newcommand{\rr}{\mathbb{R}}
\newcommand{\E}{\mathbb{E}}
\newcommand{\eq}{\begin{equation}}
\newcommand{\en}{\end{equation}}
\begin{document}

\thispagestyle{plain}
\begin{center}
\Large\textbf{Construction of Forward Performance Processes in Stochastic Factor Models and an Extension of Widder's Theorem}

\vspace{0.4cm}
\large
Levon Avanesyan\blfootnote{2010 \textit{Mathematics Subject Classification}. Primary: 35K55, 91G10; secondary: 35J15, 60H10.}\blfootnote{\textit{Key words and phrases}. Factor models, forward performance processes, generalized Widder's theorem, Hamilton-Jacobi-Bellman equations, ill-posed partial differential equations, incomplete markets, Merton problem, optimal portfolio selection, positive eigenfunctions, time-consistency.} \quad Mykhaylo Shkolnikov\footnote{M. Shkolnikov was partially supported by the NSF grant DMS-1506290.} \quad Ronnie Sircar

\vspace{0.4cm}
\textit{Princeton University}

\vspace{0.9cm}
\textbf{Abstract}
\end{center}
We consider the problem of optimal portfolio selection under forward investment performance criteria in an incomplete market. Given multiple traded assets, the prices of which depend on multiple observable stochastic factors, we construct a large class of forward performance processes with power-utility initial data, as well as the corresponding optimal portfolios. This is done by solving the associated non-linear parabolic partial differential equations (PDEs) posed in the ``wrong'' time direction, for stock-factor correlation matrices with eigenvalue equality (EVE) structure, which we introduce here. Along the way we establish on domains an explicit form of the generalized Widder's theorem of Nadtochiy and Tehranchi \cite[Theorem 3.12]{nadtochiy2015optimal} and rely hereby on the Laplace inversion in time of the solutions to suitable linear parabolic PDEs posed in the ``right'' time direction.


\section{Introduction}

In this paper we study the optimal portfolio selection problem under forward investment criteria in incomplete markets, specifically stochastic factor models. Our setup is that of a continuous-time market model with multiple stocks whose growth rates and volatilities are functions of multiple observable stochastic factors following jointly a diffusion process. The incompleteness arises hereby from the imperfect correlation between the Brownian motions driving the stock prices and the factors. The factors themselves can model various market inputs, including stochastic interest rates, stochastic volatility and major macroeconomic indicators, such as inflation, GDP growth or the unemployment rate.

\medskip

The optimal portfolio problem in continuous time was originally considered by Merton in his pioneering work \cite{merton1969lifetime}, \cite{merton1971optimum}, and is commonly referred to as the Merton problem. In this framework an investor looks to maximize her expected terminal utility from wealth acquired in the investment process within a geometric Brownian motion market model. Good compilations of classical results can be found in the books \cite{duffie2010dynamic}, \cite{karatzas1998methods}. As fundamental as this setup is, it has two important drawbacks. First, the investor must decide on her terminal utility function before entering the market, and thereby cannot adapt it to
changes in market conditions. Second, before settling on an investment strategy, the investor must firmly set her time horizon. That is, the portfolio derived in this framework is optimal only for one specific utility function over one time horizon.

\medskip

External factors such as the economic cycle, natural disasters, and the political climate can lead to dynamic changes in one's level of risk aversion. This would change the terminal utility function, thereby affecting the optimal portfolio allocation. Even if the terminal utility function stays the same, the investor might decide to exit the market at an earlier or a later time than originally planned. For two investment horizons $0<T_1<T_2$ there is no natural relation between the two respective optimal portfolios. Thus, if the investor initially decided to stay in the market until time $T_1$, but later on decided to continue the investment activities until time $T_2$, she would have to either incur significant transaction costs to rebalance her portfolio, or continue investing at a suboptimal level of expected utility from terminal wealth. In both cases she would regret her past decisions, thereby making the classical approach
terminal time inconsistent. We call performance criteria \textit{terminal time consistent} if the optimal dynamic portfolio on the time interval $[0, T_2]$ restricted to the interval $[0,T_1]$ yields the optimal dynamic portfolio on the time interval $[0, T_1]$. Finding such criteria is essential in solving portfolio optimization problems with an uncertain investment horizon. For this purpose \textit{forward investment performance criteria} were introduced and developed in \cite{musiela2006investments} and \cite{musiela2007investment}, as well as in \cite{HENDERSON20071621}.

\medskip

Instead of looking to optimize the expectation of a deterministic utility function at a single terminal point in time, this approach looks to maximize the expectation of a stochastic utility function at every single point in time. \textit{Forward performance processes} (FPPs) capture the time evolutions of such stochastic utility functions. They are increasing and strictly concave in the wealth argument, intrinsically incorporate the randomness stemming from the market, and most importantly yield terminal time-consistent investment strategies. Other than completely specifying the market and the factors that affect it, the only piece of information a portfolio manager needs is the investor's initial utility function. The portfolio manager can infer the shape of this function (or, equivalently, the level of risk aversion) by observing the return targets and the error bounds around them set by the investor.

\medskip

A comprehensive description of all FPPs remains a challenging open problem. Much work towards this goal has been carried out throughout the last ten years, see \cite{berrier2009characterization}, \cite{nicole2013exact}, \cite{nicole2013stochastic}, \cite{HENDERSON20071621}, \cite{musiela2010stochastic}, and \cite{zitkovic2009} for some important results. In \cite{musiela2010stochastic}, Musiela and Zariphopoulou proposed a construction of FPPs by means of solutions to a stochastic partial differential equation (SPDE). The SPDE can be thought of as the
forward stochastic analogue of the Hamilton-Jacobi-Bellman (HJB) equation that arises in the optimization of the expected utility from terminal wealth. Every classical solution of this SPDE which is increasing and strictly concave in the wealth argument is a local FPP, but no existence theory for such SPDEs is available, and additional conditions (to be checked on a case-by-case basis) are needed to ensure that the local FPP is a true FPP. The key novelty and difficulty in dealing with this SPDE is the introduction of the forward volatility process. It reflects the investor's uncertainty about her preferences in the future and is subject to her choice. To find all the FPPs characterized by the SPDE, one would have to find all forward volatility processes, along with initial utility functions, for which the SPDE has a classical solution. The case of zero forward volatility yields time-monotone FPPs, and was extensively discussed in \cite{musiela2010portfolio2} and \cite{musiela2010portfolio}. In \cite{nicole2013exact} and \cite{nicole2013stochastic}, El Karoui and M'rad find a functional representation of the forward volatility for which, given an initial utility function and a wealth process satisfying certain regularity conditions, the SPDE has a classical solution. Moreover, if the solution is a true FPP, it renders the chosen wealth process optimal. This is an important result, as it helps to infer investors' performance criteria from the portfolios they pick in a given market. Here, we are concerned with the complementary problem of constructing an FPP \textit{and} an associated optimal portfolio for an investor entering a new market equipped with her initial utility function.

\medskip

We consider factor-driven market models and FPPs into which the randomness enters only through the underlying stochastic factors. Assuming such a form, with a compatible forward volatility process, the SPDE mentioned above reduces to an HJB equation set in the ``wrong'' time direction. We will call its classical solutions \textit{factor-form} local FPPs if they are increasing and strictly concave in the wealth argument. In a complete market one can use the Fenchel-Legendre transform to linearize the HJB equation, and arrive at a linear second-order parabolic PDE set in the ``wrong'' time direction (see \cite{nadtochiy2015optimal}). In an incomplete market no such linearizing transformation is available in general. To the best of our knowledge, the only exception is the special case of power utility in a one-factor market model, where a linearization is possible through a distortion transformation, as discovered in \cite{zariphopoulou2001solution} for the Merton problem, and used for the construction of FPPs in \cite{nadtochiy2015optimal}, \cite{nadtochiy2014class}, and \cite{shkolnikov2015asymptotic}. We show that for a multiple factor market model with a special stock-factor correlation matrix structure (see Assumption \ref{assump.Corr} below) the distortion transformation still simplifies the HJB equation to a linear second-order parabolic equation set in the ``wrong'' time direction.

\medskip
Motivated by such a simplification in one-factor market models, Nadtochiy and Tehranchi \cite[Theorem 3.12]{nadtochiy2015optimal}  exhibited a characterization of all positive solutions to such linear parabolic equations. Their theorem constitutes a generalization of the celebrated Widder's theorem (see \cite{widder1963Appel}), which describes all positive solutions of the heat equation set in the ``wrong'' time direction. The generalized Widder's theorem reveals that positive solutions of a linear second-order parabolic equation set in the ``wrong'' time direction must be linear combinations of exponentially scaled positive eigenfunctions for the corresponding elliptic operator according to a positive finite Borel measure. Moreover, each solution is uniquely identified with a pairing of the eigenfunctions and the measure.

\medskip
In our first main theorem (Theorem \ref{Thm.FPP.Construct}) we give a new version of \cite[Theorem 3.12]{nadtochiy2015optimal} on domains
in the multiple stocks multiple factor setup with an initial utility function of power type to describe a new class of FPPs. Note that generalized Widder's theorems do not provide a way to construct the pairings of the eigenfunctions and the measure. Our second set of results (see Theorem \ref{Thm.Suff.Cond} and Remark \ref{rmk:inv.Lapl}) addresses this issue: in Theorem \ref{Thm.Suff.Cond} we give the Laplace transform of the measure in terms of the solution to a linear parabolic equation set in the ``right'' time direction, and we provide a method (see Remark
\ref{rmk:inv.Lapl}) of finding the only possible corresponding eigenfunctions as well. Thus, we indeed obtain a large explicit class of FPPs.

\medskip

The rest of the paper is structured as follows. In Section \ref{sec:MainResults} we state our main results, postponing their proofs to later sections.
In Section \ref{sec:FPPs} we introduce relevant facts about FPPs and subsequently prove Theorem \ref{Thm.FPP.Construct}. In Section
\ref{sec:Elliptic.Eigenfunctions} we show Theorem \ref{Thm.Suff.Cond}, summarize some results from the theory of linear elliptic operators, and use them to establish Propositions \ref{Prop.Dim1.Critical}, \ref{Prop.Dim1.Multi} and \ref{Prop.DimInfty.Multi}. In Section \ref{sec:BackwardCase} we discuss the Merton problem within the framework of our market model. Lastly, in Section \ref{sec: corr matr} we discuss the meaning of the main assumption in Theorem \ref{Thm.FPP.Construct} (Assumption \ref{assump.Corr}).

\section{Main results}\label{sec:MainResults}

\subsection{Model}

Consider an investor with initial capital $X_0=x>0$ aiming to invest in a market with $n\ge 1$ stocks, the prices of which follow a process $S$, and a riskless bank account with zero interest rate. The stock prices depend on an observable $k$-dimensional stochastic factor process $Y$ taking values in $D \subseteq \mathbb{R}^k$, and are driven by a $d_W$-dimensional standard Brownian motion $W$. The factor process $Y$ is itself driven by a $d_B$-dimensional standard Brownian motion $B$, whose correlation with $W$ is given by a matrix $\mathrm{corr}(W,B) = (\rho_{ij})_{i,j=1}^{d_W, d_B}$ with singular values in $[0,1]$. Without loss of generality we assume that $d_W\ge n$ (see \cite[Remark 0.2.6]{karatzaslectures}). The investor's filtration $({\mathcal F}_t)_{t\ge0}$ is generated by a pair $(S,Y)$ of processes satisfying
\begin{eqnarray}
&& \frac{\mathrm{d}S^i_t}{S_t^i}=\mu_i(Y_t)\,\mathrm{d}t+\sum_{j=1}^{d_W} \sigma_{ji}(Y_t)\,\mathrm{d}W^j_t,\quad i=1,\,2,\,\ldots,\,n, \label{eq:Dynamic.Stock} \\
&& \dif Y_t = \alpha(Y_t) \dif t + \kappa(Y_t)^T\dif B_t,
\label{eq:Dynamic.Factor} \\
&& B_t = \rho^T W_t + A^T W^{\perp}_t, \label{eq:BM}
\end{eqnarray}
where the superscript $T$ denotes transposition and $W^\perp$ is a $d_{W^\perp}$-dimensional standard Brownian motion independent of $W$. We write $\mu$ for $(\mu_1,\mu_2,\ldots,\mu_n)^T$ and $\sigma$ for $(\sigma_{ij})_{i,j=1}^{d_W,n}$ throughout.

\medskip

For the convenience of the reader we summarize the dimensions of all the quantities we have introduced thus far:
\begin{eqnarray*}
&& \mu(\cdot) - n\times 1,\;\quad \sigma(\cdot) - d_W\times n, \;\quad W_t - d_W\times 1, \\
&& \alpha(\cdot) - k\times 1,\;\quad \kappa(\cdot) - d_B\times k,\;\;\quad B_t - d_B\times 1, \\
&& \rho - d_W\times d_B,\quad A - d_{W^{\perp}}\times d_B,\quad W^{\perp}_t - d_{W^{\perp}}\times 1.
\end{eqnarray*}
Note that there is no loss of generality in using the representation \eqref{eq:BM} for the standard Brownian motion $B$, since we can let $A$ be the square root of the positive semidefinite matrix $I_{d_B}-\rho^T\rho$ (recall that the singular values of $\rho$ belong to $[0,1]$), and $d_{W^\perp}=d_B$.

\begin{assumption}\label{assump.Var}
The functions $\mu:\,D\to\rr^n$, $\sigma:\,D\to\rr^{d_W\times n}$ are continuous, the stochastic differential equation (SDE) \eqref{eq:Dynamic.Factor} possesses a unique weak solution, and the columns of $\rho$ belong to the range of left-multiplication by $\sigma(y)$ for all $y\in D$.
\end{assumption}

\begin{remark}\label{rmk:sigma.sigma.inverse}
Under Assumption \ref{assump.Var} it holds $\sigma(y)\sigma(y)^{-1}\rho=\rho$ for all $y\in D$, where $\sigma(y)^{-1}$ is the Moore-Penrose pseudoinverse of $\sigma(y)$. Indeed, $\sigma(y)\sigma(y)^{-1}\sigma(y)=\sigma(y)$, so that the columns of $\sigma(y)$ (and consequently the vectors in their span, that is, the range of the left-multiplication by $\sigma(y)$) are invariant under the left-multiplication by $\sigma(y)\sigma(y)^{-1}$. This is true, in particular, if $\sigma(y)$ has rank $n$ for all $y\in D$.
\end{remark}



Our investor dynamically allocates her wealth in the market using a
self-financing trading strategy that at any time $t\ge0$ yields a portfolio allocation $\pi_t = (\pi^1_t,\dots,\pi^n_t)$ among the $n$ stocks with the associated wealth process
\begin{equation} \label{eq:Dynamic.Wealth}
\frac{\dif X^{\pi}_t}{X^{\pi}_t} = (\sigma(Y_t)\pi_t)^T \lambda(Y_t)\dif t + (\sigma(Y_t)\pi_t)^T \dif W_t,\quad  X^\pi_0  = x,
\end{equation}
where $\lambda(Y_t)=(\sigma(Y_t)^T)^{-1}\mu(Y_t)$ is the Sharpe ratio. Apart from the self-financeability, we impose additional conditions on the trading strategies to ensure that their wealth processes $X^\pi$ are well-defined by \eqref{eq:Dynamic.Wealth}.


\begin{definition}\label{defn:AdmissibilitySet}
An $\mathcal{F}_t$-progressively measurable self-financing trading strategy is called admissible if its portfolio allocation $\pi$ among the $n$ stocks fulfills
\begin{equation}\label{eq:AdmissibilitySet}
\forall\,t\ge0:\quad \int_0^t \big|\pi_s^T\sigma(Y_s)^T\lambda(Y_s)\big|\,\mathrm{d}s<\infty\quad\text{and}\quad
\int_0^t \big|\sigma(Y_s)\pi_s\big|^2\,\mathrm{d}s<\infty
\end{equation}
with probability one. In this case, we write $\pi\in{\mathcal A}$.
\end{definition}

Next, we define (local) forward performance processes, which capture how the utility functions of an investor evolve over time as she continues to invest in the financial market above. Part of the definition is an optimality criterion for portfolio allocations $\pi\in{\mathcal A}$ that reflects the dynamic programming principle time-consistent optimal portfolio allocations $\pi^*\in\mathcal{A}$ must satisfy.

\begin{definition}\label{defn:FPP}
An $\mathcal{F}_t$-progressively measurable $U_{\cdot}(\cdot):\,[0,\infty)\times(0,\infty)\to\rr$ is referred to as a (local) forward performance process (FPP) if
\begin{enumerate}[(i)]
\item with probability one, all functions $x\mapsto U_t(x)$, $t\ge0$ are strictly concave and increasing,
\item for each $\pi\in\mathcal{A}$, the process $U_t(X_t^{\pi})$, $t\ge0$ is an $(\mathcal{F}_t)_{t\ge0}$ (local) supermartingale,
\item there exists an optimal $\pi^*\in\mathcal{A}$ for which $U_t(X_t^{\pi^*})$, $t\ge0$ is an $(\mathcal{F}_t)_{t\ge0}$ (local) martingale.
\end{enumerate}
\end{definition}

\subsection{Separable power factor form FPPs in EVE models}

We consider (local) FPPs of factor-form into which the randomness enters only through the stochastic factor process, that is,
\begin{equation}\label{eq:FPP.Factor}
U_t(x)=V(t,x,Y_t),\quad t\geq 0
\end{equation}
for a deterministic function $V:\,[0,\infty)\times(0,\infty)\times D\to\rr$. To be able to construct functions $V$ such that the corresponding $U_\cdot(\cdot)$ is a (local) FPP in the generality of the setup  \eqref{eq:Dynamic.Stock}, \eqref{eq:Dynamic.Factor} we focus on the situation when the initial utility function is of product form and a power function in the wealth variable:
\begin{equation}\label{eq:FPP.Factor.Separable.Power}
U_0(x) = V(0,x,Y_0) = \gamma^\gamma\frac{x^{1-\gamma}}{1-\gamma}\,h(Y_0)\quad \textrm{for some }\;\gamma\in(0,\infty)\backslash\{1\}.
\end{equation}

\begin{remark}\label{rmk:sep.factor.form}
The crucial simplification arising from the structure in \eqref{eq:FPP.Factor.Separable.Power} lies in its propagation to positive times. In fact, we will construct (local) FPPs of the form
\begin{equation}\label{def:sep.power}
U_t(x)=V(t,x,Y_t)=\gamma^\gamma\frac{x^{1-\gamma}}{1-\gamma}\,g(t,Y_t),
\end{equation}
where $g$ is continuously differentiable in $t$ (its first argument) and twice continuously differentiable in $y$ (the second argument). We propose to call them \emph{separable power factor form (local) FPPs}.
\end{remark}

We are able to characterize \textit{all} separable power factor form local FPPs under the next assumption on the correlation matrix $\rho=\mathrm{corr}(W,B)$.

\begin{assumption}\label{assump.Corr}
For some $p\in[0,1]$,
\begin{equation} \label{eq:Assump.Corr}
\rho^T\rho = p\,I_{d_B}.
\end{equation}
\end{assumption}

\begin{remark}\label{rmk:EVE_disc}
For any orthonormal $d_B\times d_B$ matrix $O$, we may replace $\kappa(\cdot)$ by $O\kappa(\cdot)$ and $B$ by $\widetilde{B}=OB$ in \eqref{eq:Dynamic.Factor} without changing the dynamics of the pair $(S,Y)$. Since $\widetilde{B}$ is a $d_B$-dimensional standard Brownian motion and $\mathrm{corr}(W,\widetilde{B})=O^T\rho^T\rho O$ is diagonal for an appropriate choice of $O$, we could have assumed without loss of generality from the very beginning that $\rho^T\rho$ is diagonal. Thus, the only true restriction imposed by Assumption \ref{assump.Corr} lies in the equality of the eigenvalues of $\rho^T\rho$. We refer to market models that satisfy the condition \eqref{eq:Assump.Corr} as eigenvalue equality (EVE) models. Note that for EVE models, since $\rho$ is a $d_W\times d_B$-matrix, at least one of the following two has to hold true:
\begin{enumerate}[(i)]
\item $d_W \geq d_B$,
\item $p =0$.
\end{enumerate}
Finally, we remark that when $d_B = 1$, $\rho^T\rho$ is a scalar, so that Assumption \ref{assump.Corr} holds automatically. Section \ref{sec: corr matr} is devoted to a further discussion of EVE models.
\end{remark}

\subsection{Characterizing the FPPs}\label{subsec:CharFPP}

In order to describe our construction of separable power factor form FPPs, we need to introduce some quantities related to linear elliptic operators of the second order. Consider on $C^2(D)$ such an operator
\begin{equation}\label{eq:EllipticOperator}
\mathcal{L} = \frac{1}{2}\sum_{i,j=1}^{k}a_{ij}(y)\frac{\partial^2}{\partial y_i\partial y_j} +
 \sum_{i=1}^{k}b_i(y)\frac{\partial}{\partial y_i} + P(y)
\end{equation}
under the following assumption.

\begin{assumption}\label{Assump.Holder}
The operator $\mathcal{L}$ is locally uniformly elliptic with locally $\eta$-H\"older continuous and globally bounded coefficients. That is, with $a(\cdot)=(a_{ij}(\cdot))_{i,j=1}^k$ and $b(\cdot)=(b_1(\cdot),b_2(\cdot),\ldots,b_k(\cdot))^T$, there exists an $\eta\in(0,1)$ such that, for any bounded subdomain $D'$ of $D$ satisfying $\overbar{D'}\subset D$,
\begin{enumerate}[(i)]
\item $\inf_{y \in  \overbar{D'},\, |v| = 1}v^T a(y) v >0$,
\item $\|a\|_{\eta, D'},\,\|b\|_{\eta, D'},\,\|P\|_{\eta, D'} < \infty$, where $\|f\|_{\eta,D'}= \sup_{x,y\in D',x\neq y} \frac{|f(x)-f(y)|}{|x-y|^{\eta}}$,
\end{enumerate}
and
\begin{enumerate}
\item[(iii)] $\sup_{y\in D} |a(y)|,\,\sup_{y\in D} |b(y)|,\,\sup_{y\in D} |P(y)| < \infty$.
\end{enumerate}
\end{assumption}

\begin{remark}
Whenever $D=\rr^k$ and conditions (i)-(iii) in Assumption \ref{Assump.Holder} hold with $\rr^k$ instead of $\overbar{D'}$ and $D'$, they are also fulfilled in their original form. Moreover, in this case, the SDE \eqref{eq:Dynamic.Factor} has a unique weak solution (see \cite[Chapter 5, Remarks 4.17 and 4.30]{karatzas1991brownian}).
\end{remark}


We define the H\"older space $C^{2,\eta}(D)\subset C^2(D)$ as the subspace consisting of functions whose second-order partial derivatives are locally $\eta$-H\"older continuous (in the same sense as in condition (ii) of Assumption \ref{Assump.Holder}). Next, we introduce the sets of positive eigenfunctions for the operator $\mathcal{L}$, which correspond to eigenvalues $\zeta\in\rr$, and are normalized at some fixed $y_0 \in D$:
\begin{equation}
C_{\mathcal{L} - \zeta}(D) = \big\{\psi \in C^{2,\eta}(D):\, \psi(\cdot)>0,\, \psi(y_0) = 1,\, (\mathcal{L} - \zeta)\psi=0 \big\}.
\end{equation}
Moreover, we let $\mathbb{S}_{\mathcal{L}}(D)$ be the spectrum of $\mathcal{L}$ associated with positive eigenfunctions:
\begin{equation}
\mathbb{S}_{\mathcal{L}}(D)= \big\{\zeta \in \mathbb{R}:\: C_{\mathcal{L} - \zeta}(D)\neq\emptyset\big\}.
\end{equation}
Finally, we call a functional $\Psi:\,\mathbb{S}_{\mathcal{L}}(D)\times D\to(0,\infty)$ such that $\Psi(\zeta,\cdot) \in C_{\mathcal{L} - \zeta}(D)$ for all $\zeta\in\mathbb{S}_{\mathcal{L}}(D)$, a \textit{selection of positive eigenfunctions}, and recall the definition of Bochner integrability in this setting.

\begin{definition}
Given a positive finite Borel measure $\nu$ on $\mathbb{S}_{\mathcal{L}}(D)$, we refer to a selection of positive eigenfunctions $\Psi:\,\mathbb{S}_{\mathcal{L}}(D)\times D \to(0,\infty)$ as $\nu$-Bochner integrable if, for all compact $K\subset D$, $\int_{\mathbb{S}_{\mathcal{L}}(D)} \|\Psi(\zeta, \cdot)\|_K\,\nu(\dif \zeta) < \infty$, where $\|f\|_K=\sup_{y\in K} |f(y)|$.
\end{definition}

We are now ready to state our first main result.

\begin{theorem}\label{Thm.FPP.Construct}
Suppose the market model \eqref{eq:Dynamic.Stock}, \eqref{eq:Dynamic.Factor}, the correlation matrix $\rho$, and the linear elliptic operator of the second order $\mathcal{L}$ in \eqref{eq:EllipticOperator} with the coefficients
\begin{equation}\label{L_coeff}
a(\cdot) = \kappa(\cdot)^T\kappa(\cdot),\quad
b(\cdot) = \alpha(\cdot) + \Gamma\kappa(\cdot)^T\rho^T\lambda(\cdot),\quad
P(\cdot) = \frac{\Gamma}{2q}\lambda(\cdot)^T\lambda(\cdot),
\end{equation}
where $\Gamma =\frac{1-\gamma}{\gamma}$ and $q = \frac{1}{1+ \Gamma p}$, satisfy the Assumptions \ref{assump.Var}, \ref{assump.Corr}, and \ref{Assump.Holder}, respectively. Then:

\begin{enumerate}[(i)]
\item For any positive finite Borel measure $\nu$ on $\mathbb{S}_{\mathcal{L}}(D)$ and a $\nu$-Bochner integrable selection of positive eigenfunctions $\Psi:\,\mathbb{S}_{\mathcal{L}}(D)\times D \to (0,\infty)$ the unique separable power factor form local FPP $U_\cdot(\cdot)$ with the initial condition
\begin{equation}\label{eq:Thm.FPP.Construct.IC}
U_0(x) = \gamma^\gamma \frac{x^{1-\gamma}}{1-\gamma}\,\bigg( \int_{\mathbb{S}_{\mathcal{L}}(D)}\Psi(\zeta,Y_0)\,\nu (\dif \zeta)\bigg)^q
\end{equation}
is given by
\begin{equation}\label{eq:Thm.FPP.Construct.Sol}
U_t(x) = \gamma^{\gamma} \frac{x^{1-\gamma}}{1-\gamma}\,\bigg(\int_{\mathbb{S}_{\mathcal{L}}(D)} e^{-t\zeta}
  \Psi(\zeta,Y_t)\,\nu(\dif \zeta)\bigg)^q.
\end{equation}
Moreover, any $\pi^*$ that solves
\begin{equation}\label{eq:Thm.FPP.Construct.OptPort}
  \sigma(Y_t) \pi_t^{*} =
  \frac{1}{\gamma}\,\Bigg(\!\lambda(Y_t) + q \rho \kappa(Y_t)\,
  \frac{\int_{\mathbb{S}_{\mathcal{L}}(D)} e^{-t\zeta}\,(\nabla_y \Psi)(\zeta, Y_t)\,\nu(\dif
  \zeta)}{\int_{\mathbb{S}_{\mathcal{L}}(D)} e^{-t\zeta}\,\Psi(\zeta, Y_t)\,\nu(\dif \zeta)} \Bigg)
\end{equation}
is an associated optimal portfolio.
\item Given a function $h:\,D\to(0,\infty)$, there exists a local FPP of separable power factor form with the initial condition
\begin{equation}\label{eq:U0}
U_0(x)=\gamma^\gamma \frac{x^{1-\gamma}}{1-\gamma}\,h(Y_0)^q
\end{equation}
if and only if there exists a positive finite Borel measure $\nu$ on $\mathbb{S}_{\mathcal{L}}(D)$ and a $\nu$-Bochner integrable selection of positive eigenfunctions $\Psi:\,\mathbb{S}_{\mathcal{L}}(D)\times D \to (0,\infty)$ such that
\begin{align}\label{eq:Thm.FPP.Construct.Condition}
h(y) = \int_{\mathbb{S}_{\mathcal{L}}(D)} \Psi(\zeta, y)\,\nu (\dif \zeta).
\end{align}
In this case, the local FPP of separable power factor form and the corresponding optimal portfolios are given by
\eqref{eq:Thm.FPP.Construct.Sol} and \eqref{eq:Thm.FPP.Construct.OptPort}, respectively.
\end{enumerate}
\end{theorem}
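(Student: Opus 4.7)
The plan is to reduce the construction of separable power factor form local FPPs to the study of positive classical solutions of a linear parabolic equation on $[0,\infty)\times D$, and then to invoke a version of the generalized Widder theorem adapted to the domain $D$. First, I would start from a candidate $U_t(x) = \gamma^\gamma \frac{x^{1-\gamma}}{1-\gamma}\, g(t, Y_t)$ with $g\in C^{1,2}([0,\infty)\times D)$ positive and apply It\^o's formula to $U_t(X_t^\pi)$ for an arbitrary $\pi\in\mathcal{A}$. The drift produced has the shape $A(t,Y_t)\,|\sigma(Y_t)\pi_t|^2 + B(t,Y_t,\nabla_y g)\cdot(\sigma(Y_t)\pi_t) + C(t,Y_t,g,\nabla_y g,\nabla_y^2 g)$, with a strictly negative quadratic coefficient because $\gamma>0$ and $g>0$. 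Demanding (ii)--(iii) of Definition \ref{defn:FPP} for every admissible $\pi$ forces pointwise maximization: the first-order condition yields the optimal portfolio \eqref{eq:Thm.FPP.Construct.OptPort}, and the maximum drift being identically zero gives a nonlinear parabolic HJB equation for $g$ set in the ``wrong'' time direction.

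The crucial step is the distortion transformation $g = \phi^q$ with $q = \tfrac{1}{1+\Gamma p}$. A direct computation shows that under Assumption \ref{assump.Corr} the nonlinear term of the form $\tfrac{|\rho\kappa(y)\nabla_y g|^2}{g}$, which originates from completing the square in $\sigma\pi$, collapses via $\rho^T\rho = p\,I_{d_B}$ into a scalar multiple of $|\kappa(y)\nabla_y g|^2 / g$, and the coefficient is exactly the one that cancels against the corresponding term produced by the chain rule for $g=\phi^q$. The net effect is that $\phi$ satisfies the linear equation $\partial_t \phi = \mathcal{L}\phi$ on $[0,\infty)\times D$, with $\mathcal{L}$ being precisely \eqref{eq:EllipticOperator}--\eqref{L_coeff}. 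Without Assumption \ref{assump.Corr} the quadratic form $\kappa(y)^T\rho^T\rho\,\kappa(y)$ would depend on $y$ in a non-scalar way and no single exponent $q$ would achieve this cancellation; this is where the EVE hypothesis is essential.

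Once the equivalence (separable power factor form local FPP with $U_0$ as in \eqref{eq:U0}) $\Longleftrightarrow$ (positive classical solution of $\partial_t\phi = \mathcal{L}\phi$ with $\phi(0,\cdot)=h$) is established, the generalized Widder representation on $D$ (proved later in the paper as the announced extension of \cite[Theorem 3.12]{nadtochiy2015optimal}) provides a bijection between such positive solutions $\phi$ and pairs $(\nu,\Psi)$ consisting of a positive finite Borel measure $\nu$ on $\mathbb{S}_{\mathcal{L}}(D)$ and a $\nu$-Bochner integrable selection $\Psi$, via
\begin{equation*}
\phi(t,y) \;=\; \int_{\mathbb{S}_{\mathcal{L}}(D)} e^{-t\zeta}\,\Psi(\zeta,y)\,\nu(\mathrm{d}\zeta).
\end{equation*}
For part (i), I would start from this integral representation, use the $\nu$-Bochner integrability together with Assumption \ref{Assump.Holder} and interior parabolic Schauder estimates to differentiate under the integral and conclude that $\phi\in C^{1,2}$, then set $g=\phi^q$ and verify through It\^o that $U_t(X_t^\pi)$ is a local supermartingale for every $\pi\in\mathcal{A}$ and a local martingale for any $\pi^*$ satisfying \eqref{eq:Thm.FPP.Construct.OptPort}. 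Strict concavity and monotonicity in $x$ follow from $\gamma\in(0,\infty)\setminus\{1\}$ and $g>0$, while uniqueness follows from the injectivity of the Widder correspondence combined with the established equivalence. Part (ii) is then a direct reformulation: given $h$, evaluating the representation at $t=0$ gives \eqref{eq:Thm.FPP.Construct.Condition} in the necessary direction, while part (i) supplies the sufficient direction.

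The main obstacle is the generalized Widder theorem on a general subdomain $D\subseteq\mathbb{R}^k$; the original Nadtochiy--Tehranchi result is stated on $\mathbb{R}^k$ and its adaptation to $D$ requires dealing with the lack of a global Gaussian bound and controlling the behavior near $\partial D$ via Harnack's inequality, elliptic regularity from Assumption \ref{Assump.Holder}, and a Choquet-type extreme-ray analysis of the cone of positive solutions. A secondary technical point is verifying that the formally obtained $\pi^*$ of \eqref{eq:Thm.FPP.Construct.OptPort} lies in $\mathcal{A}$, which reduces to the integrability conditions \eqref{eq:AdmissibilitySet} and can be checked pathwise from the $\nu$-Bochner integrability of $\Psi$ and $\nabla_y\Psi$ on compacts visited by $Y$.
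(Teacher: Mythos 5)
Your proposal follows essentially the same route as the paper: apply It\^o's formula to the separable power factor form ansatz to obtain the HJB equation and the first-order condition for $\pi^*$, then apply the distortion $g=\phi^q$ with $q=\frac{1}{1+\Gamma p}$, which under the EVE assumption $\rho^T\rho=pI_{d_B}$ kills the quadratic gradient term and reduces the HJB to a linear parabolic PDE posed in the ``wrong'' time direction, and finally invoke a generalized Widder theorem on $D$ (the paper's Theorem~\ref{Prop:GeneralizedWidder}) to get the bijection with pairings $(\Psi,\nu)$. One small sign slip: the linearized equation should be $\partial_t\phi+\mathcal{L}\phi=0$ (not $\partial_t\phi=\mathcal{L}\phi$) with $\mathcal{L}$ as in \eqref{L_coeff}, which is what your representation $\phi(t,y)=\int e^{-t\zeta}\Psi(\zeta,y)\,\nu(\mathrm{d}\zeta)$ actually solves.
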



\begin{remark}
We note that the equation \eqref{eq:Thm.FPP.Construct.OptPort} for optimal portfolios $\pi^*$ does not involve the initial wealth $x$. This is a consequence of the local FPP being of separable power factor form. In the setting of the Merton problem, the same statement is true (and well-known) for terminal utility functions of power form.
\end{remark}

\begin{remark}
A solution to the optimal portfolio equation \eqref{eq:Thm.FPP.Construct.OptPort} can be obtained as follows. Since $\sigma(\cdot)^{-1}=(\sigma(\cdot)^T\sigma(\cdot))^{-1}\sigma(\cdot)^T$, one can write $\lambda(\cdot)=(\sigma(\cdot)^T)^{-1}\mu(\cdot)$ as $\sigma(\cdot)(\sigma(\cdot)^T\sigma(\cdot))^{-1}\mu(\cdot)$. In addition, by Assumption \ref{assump.Var} and the Borel selection result of \cite[Theorem 6.9.6]{Bog}, one can find a measurable $\varsigma:\,D\to\rr^{n\times d_B}$ satisfying $\sigma(\cdot)\varsigma(\cdot)=\rho$, which renders
\begin{equation}\label{optimal_portfolio}
\pi^*_t=\frac{1}{\gamma}\,\Bigg(\!(\sigma(Y_t)^T\sigma(Y_t))^{-1}\mu(Y_t) + q \varsigma(Y_t)\kappa(Y_t)\,
  \frac{\int_{\mathbb{S}_{\mathcal{L}}(D)} e^{-t\zeta}\,(\nabla_y \Psi)(\zeta, Y_t)\,\nu(\dif
  \zeta)}{\int_{\mathbb{S}_{\mathcal{L}}(D)} e^{-t\zeta}\,\Psi(\zeta, Y_t)\,\nu(\dif \zeta)} \Bigg)
\end{equation}
a solution of \eqref{eq:Thm.FPP.Construct.OptPort}.
\end{remark}

Part (ii) of Theorem \ref{Thm.FPP.Construct} shows that, once a portfolio manager has an estimate for an investor's level of risk-aversion $\gamma$ and the functional dependence (encoded by $h$) of her current utility function on the value $Y_0$ of the factor process, he can extrapolate the future values of her utility function according to \eqref{eq:Thm.FPP.Construct.Sol} and acquire a portfolio fulfilling \eqref{eq:Thm.FPP.Construct.OptPort} (e.g. the portfolio in \eqref{optimal_portfolio}) on her behalf, provided $h$ is of the form \eqref{eq:Thm.FPP.Construct.Condition}. It is therefore crucial to understand which functions $h$ admit the representation \eqref{eq:Thm.FPP.Construct.Condition} and to be able to determine the pairings $(\Psi,\nu)$ for such.

\subsection{Finding selections of positive eigenfunctions $\Psi$ and measures $\nu$}

The next set of results addresses the problem of solving the equation \eqref{eq:Thm.FPP.Construct.Condition} for the pairing $(\Psi,\nu)$, when it exists. The equation \eqref{eq:Thm.FPP.Construct.Condition} stems from a further generalization of the generalized Widder's theorem of Nadtochiy and Tehranchi \cite[Theorem 3.12]{nadtochiy2015optimal} (see Theorem \ref{Prop:GeneralizedWidder} below) and, thus, our results can be viewed as yielding explicit versions of such theorems. The following theorem is also of independent interest, as it relates the pairing $(\Psi,\nu)$ arising in the positive solution of a linear second-order parabolic PDE posed in the ``wrong'' time direction to the solution of the same PDE posed in the ``right'' time direction.



\begin{theorem}\label{Thm.Suff.Cond}
Let $\mathcal{L}$ satisfy Assumption \ref{Assump.Holder} and let $h\in C^{2,\eta}(D)$ be a positive function such that
\begin{equation}\label{asmp.FK}
(t,y)\mapsto\E\big[h(Z_t)\,\mathbf{1}_{\{\tau>t\}}\,\big|\,Z_0=y\big]
\end{equation}
is locally bounded on $[0,\epsilon]\times D$ for the weak solution $Z$ of the SDE associated with ${\mathcal L}_0:={\mathcal L}-P(y)$ and $\varepsilon>0$, where $\tau$ is the first exit time of $Z$ from $D$. Then, there exists a classical solution to
\begin{equation}\label{what is u}
\partial_t u + \mathcal{L} u=0\;\;\text{on}\;\;[-\varepsilon,0]\times D,\quad u(0,\cdot)=h.
\end{equation}
Moreover, for a positive finite Borel measure $\nu$ on $\mathbb{S}_{\mathcal{L}}(D)$ and a $\nu$-Bochner integrable selection of positive eigenfunctions $\Psi:\,\mathbb{S}_{\mathcal{L}}(D)\times D \to (0,\infty)$ the function $h$ can be expressed as $\int_{\mathbb{S}_{\mathcal{L}}(D)}\Psi(\zeta,\cdot)\,\nu(\dif \zeta)$ if and only if, for every $y\in D$, the function $u(\cdot,y)$ on $(-\varepsilon,0]$ is the Laplace transform of the measure $\Psi(\zeta,y)\,\nu(\mathrm{d}\zeta)$, that is,
\begin{equation}
u(t,y) = \int_{\mathbb{S}_{\mathcal{L}}(D)} e^{-\zeta t}\,\Psi(\zeta,y)\,\nu (\dif \zeta),\quad t\in(-\varepsilon,0].
\end{equation}
In this case, it holds, in particular,
\begin{equation}\label{eq:Lapl_conn}
u(t,y_0) = \int_{\mathbb{S}_{\mathcal{L}}(D)} e^{-\zeta t}\,\nu (\dif \zeta),\quad t\in(-\varepsilon,0].
\end{equation}
\end{theorem}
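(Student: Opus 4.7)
My plan has three stages: establish existence of $u$ via a Feynman--Kac representation, verify that the Laplace-transform candidate $\tilde u$ is itself a positive classical solution of the same equation, and identify them using the generalized Widder theorem. For existence, setting $s=-t$ converts \eqref{what is u} into the forward Cauchy problem $\partial_s v=\mathcal{L}v$ on $[0,\varepsilon]\times D$ with $v(0,\cdot)=h$, whose Feynman--Kac candidate is
\[
v(s,y)=\E\!\left[h(Z_s)\exp\!\Big(\!\int_0^s P(Z_r)\,\dif r\Big)\,\mathbf{1}_{\{\tau>s\}}\,\Big|\,Z_0=y\right],
\]
with $Z$ the weak solution of the SDE associated with $\mathcal{L}_0=\mathcal{L}-P$. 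Hypothesis~\eqref{asmp.FK} and the boundedness of $P$ (Assumption~\ref{Assump.Holder}) make $v$ locally bounded; interior parabolic Schauder theory under Assumption~\ref{Assump.Holder} then upgrades $v$ to $C^{1,2}$, so $u(t,y):=v(-t,y)$ solves \eqref{what is u}. The ``only if'' half of the equivalence is immediate: evaluating the Laplace representation at $t=0$ and using $u(0,\cdot)=h$ reproduces the Bochner representation of $h$.

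For the ``if'' direction I set
\[
\tilde u(t,y):=\int_{\mathbb{S}_\mathcal{L}(D)} e^{-\zeta t}\,\Psi(\zeta,y)\,\nu(\dif \zeta),\quad t\in(-\varepsilon,0],
\]
and verify that $\tilde u$ is a positive classical solution of \eqref{what is u} with $\tilde u(0,\cdot)=h$. Absolute convergence on compact subsets of $(-\varepsilon,0]\times D$ follows from the $\nu$-Bochner integrability at $t=0$ together with domination of the $\zeta\to+\infty$ tail by the locally bounded $v$; fiberwise interior Schauder estimates for the elliptic equation $(\mathcal{L}-\zeta)\Psi(\zeta,\cdot)=0$ justify differentiation under the integral sign; and the eigenfunction identity then gives $\partial_t\tilde u+\mathcal{L}\tilde u=0$.

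The main obstacle is closing the loop by proving $u=\tilde u$. Since positive solutions of the Cauchy problem on a general domain $D$ need not be unique, a direct maximum-principle argument is unavailable, and I plan to invoke the generalized Widder theorem of Nadtochiy and Tehranchi (\cite[Theorem 3.12]{nadtochiy2015optimal}, restated in this paper as Theorem~\ref{Prop:GeneralizedWidder}), which assigns to every positive classical solution of the forward Cauchy problem a \emph{unique} representing pairing. By construction $\tilde v(s,y):=\tilde u(-s,y)$ has pairing $(\Psi,\nu)$; let $(\widehat\Psi,\widehat\nu)$ denote the pairing of $v(s,y):=u(-s,y)$. Both pairings reproduce $h$ at $s=0$, and to force their equality I plan to substitute $h=\int \Psi(\zeta,\cdot)\,\nu(\dif\zeta)$ into the Feynman--Kac representation of $v$ and interchange expectation and $\nu$-integration via Fubini--Tonelli. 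This reduces the question to showing that the nonnegative local martingale
\[
M_t^\zeta:=e^{-\zeta t+\int_0^t P(Z_r)\,\dif r}\,\Psi(\zeta,Z_t)\,\mathbf{1}_{\{\tau>t\}}
\]
is in fact a true martingale for $\nu$-a.e.\ $\zeta$; I expect this to follow from the $\nu$-Bochner integrability together with hypothesis~\eqref{asmp.FK}, which controls precisely the integrated probabilistic mass of the pairing against the killed diffusion. The identity \eqref{eq:Lapl_conn} is then immediate by substituting $y=y_0$ and using the normalization $\Psi(\zeta,y_0)=1$.
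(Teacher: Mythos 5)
Your construction of $u$ via Feynman--Kac and the verification that $\tilde u(t,y):=\int e^{-\zeta t}\Psi(\zeta,y)\,\nu(\dif\zeta)$ solves the PDE are in the spirit of the paper, and the trivial implication (Laplace form of $u$ $\Rightarrow$ Bochner form of $h$; you label it ``only if'' but it is the ``if'' direction of the stated equivalence) is fine. The genuine gap is in the hard direction, where you want to substitute $h=\int\Psi(\zeta,\cdot)\,\nu(\dif\zeta)$ into the Feynman--Kac formula for $u$, interchange $\E$ and $\nu$-integration, and then conclude by arguing that the nonnegative local martingale $M^\zeta_t=e^{-\zeta t+\int_0^t P(Z_r)\dif r}\,\Psi(\zeta,Z_t)\mathbf{1}_{\{\tau>t\}}$ is a true martingale for $\nu$-a.e.\ $\zeta$. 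That is not established, and neither of the stated hypotheses forces it. The eigenfunction equation only makes $M^\zeta$ a nonnegative local martingale and hence a supermartingale; the true-martingale property is equivalent to conservativeness of the Doob transform of the killed diffusion by $\Psi(\zeta,\cdot)$, which is a strong structural property not enjoyed by an arbitrary selection of positive eigenfunctions. Bochner integrability controls $\Psi(\zeta,\cdot)$ on compacta of $D$ and \eqref{asmp.FK} controls the Feynman--Kac kernel applied to $h$, but neither prevents mass from escaping through $\partial D$ under the $\Psi$-transform; in general you only get $\E[M^\zeta_t]\le\Psi(\zeta,y)$, i.e.\ an inequality rather than the identity you need. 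A further, secondary issue is that the generalized Widder theorem (Theorem~\ref{Prop:GeneralizedWidder}) applies to positive solutions on the forward half-line $(0,\infty)\times D$ of the ill-posed equation, not to solutions on a finite window $[-\varepsilon,0]\times D$, so ``the pairing of $v(s,y):=u(-s,y)$'' is not directly defined; you must first extend $u$ forward in time before Widder has anything to say.

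The paper's route circumvents the martingale issue entirely by a gluing argument. It first defines $\tilde u(t,y)=\int e^{-\zeta t}\Psi(\zeta,y)\,\nu(\dif\zeta)$ on all of $[0,\infty)\times D$ (the spectrum being bounded below makes this finite) and observes via Theorem~\ref{Prop:GeneralizedWidder} that $\tilde u$ is a classical solution there. It then glues $u$ on $[-\varepsilon,0]\times D$ with $\tilde u$ on $(0,\infty)\times D$ into one function $v$ and checks, using the interior Schauder estimate to control $\partial_t\tilde u$ as $t\downarrow 0$, that $v$ is a genuine classical solution on $[-\varepsilon,\infty)\times D$. Finally it applies Theorem~\ref{Prop:GeneralizedWidder} (with the appropriate Harnack inequality) to the shifted and normalized function $\tilde v(t,y)=v(t-\varepsilon,y)/v(-\varepsilon,y_0)$ on $\{(0,y_0)\}\cup((0,\infty)\times D)$, compares the resulting unique pairing $(\tilde\Psi,\tilde\nu)$ with the representation of $\tilde u$ on $t>\varepsilon$ via uniqueness of the Laplace transform, and reads off $\tilde\Psi=\Psi$, $\tilde\nu(\dif\zeta)=e^{\varepsilon\zeta}\nu(\dif\zeta)/v(-\varepsilon,y_0)$; this in turn pins down $u$ on $(-\varepsilon,0]\times D$. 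So the central idea you are missing is the forward-time extension followed by a Widder-uniqueness argument applied to the shifted solution, in place of any probabilistic martingale claim.
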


\begin{remark}\label{rmk:inv.Lapl}
Theorem \ref{Thm.Suff.Cond} reveals that, whenever a pairing $(\Psi,\nu)$ exists, it can be inferred by finding the measure $\nu$ through a one-dimensional Laplace inversion of $u(\cdot,y_0)$ (recall that the values of the Laplace transform on a non-trivial interval determine the underlying positive finite Borel measure, see \cite[Section 30]{billingsley2012probability}) and then the functions $\Psi(\cdot,y)$, $y\in D\backslash\{y_0\}$ from $u(\cdot,y)$, $y\in D\backslash\{y_0\}$ through additional one-dimensional Laplace inversions.
\end{remark}

As a by-product we obtain the following uniqueness result for linear second-order parabolic PDEs posed in the ``wrong'' time direction by combining the generalized Widder's theorem on domains (Theorem \ref{Prop:GeneralizedWidder} below) with Theorem \ref{Thm.Suff.Cond} and the uniqueness of the Laplace transform (see \cite[Section 30]{billingsley2012probability}).

\begin{corollary} \label{Prop:GeneralizedWidderUniqueness}
For any operator $\mathcal{L}$ satisfying Assumption \ref{Assump.Holder} and positive $h\in C^{2,\eta}(D)$ such that the function in  \eqref{asmp.FK} is locally bounded on a non-trivial cylinder $[0,\varepsilon]\times D$, there is at most one positive solution $\tilde{u}$ of the problem
\begin{equation}\label{eq:ill_posed}
\partial_t \tilde{u} + \mathcal{L} \tilde{u}=0\;\;\text{on}\;\;[0,\infty)\times D,\quad\tilde{u}(0,\cdot)=h.
\end{equation}
\end{corollary}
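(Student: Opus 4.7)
The plan is to show that the Widder-type integral representation supplied by Theorem \ref{Prop:GeneralizedWidder} is rigid in the sense that two distinct positive solutions on $[0,\infty)\times D$ with the same initial datum would induce two distinct pairings $(\Psi,\nu)$ representing $h$, and Theorem \ref{Thm.Suff.Cond} combined with the injectivity of the Laplace transform will forbid this. So I would begin by assuming that $\tilde{u}_1$ and $\tilde{u}_2$ are both positive classical solutions of \eqref{eq:ill_posed}. Applying the generalized Widder's theorem on domains (Theorem \ref{Prop:GeneralizedWidder}) to each, I obtain representations
\[
\tilde{u}_i(t,y) \;=\; \int_{\mathbb{S}_{\mathcal{L}}(D)} e^{-\zeta t}\,\Psi_i(\zeta,y)\,\nu_i(\mathrm{d}\zeta),\qquad i=1,2,
\]
for positive finite Borel measures $\nu_i$ on $\mathbb{S}_{\mathcal{L}}(D)$ and $\nu_i$-Bochner integrable selections $\Psi_i$. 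Evaluating at $t=0$ shows that both pairings represent the common datum $h$.

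Next, the hypothesis that \eqref{asmp.FK} is locally bounded on $[0,\varepsilon]\times D$ puts Theorem \ref{Thm.Suff.Cond} at my disposal: it produces a classical solution $u$ on $[-\varepsilon,0]\times D$ of the well-posed (right-direction) problem \eqref{what is u}, and it asserts that \emph{any} pairing $(\Psi,\nu)$ representing $h$ satisfies
\[
u(t,y) \;=\; \int_{\mathbb{S}_{\mathcal{L}}(D)} e^{-\zeta t}\,\Psi(\zeta,y)\,\nu(\mathrm{d}\zeta),\qquad t\in(-\varepsilon,0],\; y\in D.
\]
In particular, both $(\Psi_1,\nu_1)$ and $(\Psi_2,\nu_2)$ yield this same left-hand side $u$ on $(-\varepsilon,0]\times D$.

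Specializing to $y=y_0$ and using the normalization $\Psi_i(\zeta,y_0)=1$, I obtain $u(t,y_0)=\int e^{-\zeta t}\,\nu_i(\mathrm{d}\zeta)$ for $t\in(-\varepsilon,0]$ and $i=1,2$. Since two Laplace transforms of positive finite Borel measures that agree on an interval of positive length must coincide (see \cite[Section 30]{billingsley2012probability}), this forces $\nu_1=\nu_2=:\nu$. Repeating the argument at an arbitrary fixed $y\in D$, the measures $\Psi_1(\cdot,y)\,\nu(\mathrm{d}\zeta)$ and $\Psi_2(\cdot,y)\,\nu(\mathrm{d}\zeta)$ have identical Laplace transforms on $(-\varepsilon,0]$, so $\Psi_1(\zeta,y)=\Psi_2(\zeta,y)$ for $\nu$-a.e.\ $\zeta$. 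Feeding this back into the Widder representations gives $\tilde{u}_1\equiv\tilde{u}_2$ on $[0,\infty)\times D$, which is the claimed uniqueness.

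The only delicate point I anticipate is ensuring that Theorem \ref{Prop:GeneralizedWidder} indeed applies to an \emph{arbitrary} positive classical solution on $[0,\infty)\times D$, and not merely to solutions that we ourselves construct from a pairing; granting this, the remainder of the argument is just a clean concatenation of the representation theorem, Theorem \ref{Thm.Suff.Cond}, and Laplace uniqueness, and requires no further regularity hypotheses on $\tilde{u}_1$ or $\tilde{u}_2$.
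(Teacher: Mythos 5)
Your argument is correct and follows essentially the same route the paper indicates: combining the generalized Widder's theorem on domains (Theorem \ref{Prop:GeneralizedWidder}), Theorem \ref{Thm.Suff.Cond}, and the injectivity of the Laplace transform. The one small normalization you elide---Theorem \ref{Prop:GeneralizedWidder} is stated for solutions with $u(0,y_0)=1$ and returns a probability measure, so one first applies it to $\tilde{u}_i/h(y_0)$ and then rescales---is routine and does not affect the conclusion.
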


\begin{remark}
We stress that Corollary \ref{Prop:GeneralizedWidderUniqueness} is not an immediate consequence of the generalized Widder's theorem on domains (Theorem \ref{Prop:GeneralizedWidder}) by itself. The latter does ensure that every pairing $(\Psi,\nu)$ corresponds to exactly one positive solution $\tilde{u}$ of \eqref{eq:ill_posed}. However, it is not clear a priori whether the representation $h=\int_{\mathbb{S}_{\mathcal{L}}(D)} \Psi(\zeta,\cdot)\,\nu(\dif \zeta)$ is unique for all functions $h$ with the property \eqref{asmp.FK}. Theorem \ref{Thm.Suff.Cond} and the uniqueness of the Laplace transform (see \cite[Section 30]{billingsley2012probability}) show that this representation is, indeed, unique.
\end{remark}

For arbitrary operators relatively little is known about the sets of positive eigenfunctions $C_{\mathcal{L}-\zeta}(D)$. Nevertheless, in certain situations additional information on the sets $C_{\mathcal{L}-\zeta}(D)$ is available and can be exploited to find the selection of positive eigenfunctions $\Psi$ for a given function $h$ by a \textit{finite} number of Laplace inversions.

\begin{proposition}\label{Prop.Dim1.Critical}
Let $\mathcal{L}$ satisfy Assumption \ref{Assump.Holder}, then
\begin{equation}\label{crit_eig}
\zeta_c(D):=\inf\big\{\zeta\in\rr:\,
\zeta \in \mathbb{S}_{\mathcal{L}}(D)\big\}\in \mathbb{S}_{\mathcal{L}}(D).
\end{equation}
If, in addition, the potential $P$ is constant and $\mathcal{L}_0:=\mathcal{L}-P$ is such that the corresponding solution of the generalized martingale problem on $D$ (see \cite[Section 1.13]{pinsky1995positive}) is recurrent, then $\zeta_c(D)=-P$ and $|C_{\mathcal{L} - \zeta_c(D)}(D)|=1$.
\end{proposition}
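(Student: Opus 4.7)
The plan is to derive both parts from the classical theory of positive eigenfunctions of second-order elliptic operators, as developed in Pinsky's monograph \cite{pinsky1995positive}. For part (i), the quantity $\zeta_c(D)$ is precisely the generalized principal eigenvalue of $\mathcal{L}$ on $D$, and the goal is to establish its attainment by a Harnack-plus-Schauder compactness argument. For part (ii), one shifts the eigenvalue parameter to reduce the problem to the potential-free operator $\mathcal{L}_0 = \mathcal{L} - P$, and then exploits recurrence of the associated diffusion to identify the ground state explicitly as the constant function $1$.

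For part (i), I would choose a minimizing sequence $\zeta_n \in \mathbb{S}_{\mathcal{L}}(D)$ with $\zeta_n \downarrow \zeta_c(D)$, together with corresponding $\psi_n \in C_{\mathcal{L}-\zeta_n}(D)$, each positive and normalized by $\psi_n(y_0)=1$. Since the coefficients of $\mathcal{L} - \zeta_n$ differ from those of $\mathcal{L}$ only in the zeroth-order term by a bounded additive constant, the local uniform ellipticity and local H\"older control in Assumption \ref{Assump.Holder} hold uniformly in $n$. The interior Harnack inequality then yields uniform upper and lower bounds on $\psi_n$ on each compact subset $K$ of $D$, and interior Schauder estimates upgrade these to uniform $C^{2,\eta}$ bounds on $K$. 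A diagonal extraction produces a subsequential limit $\psi$ with $\psi(y_0)=1$, $\psi \geq 0$, and $(\mathcal{L}-\zeta_c(D))\psi = 0$; the strong minimum principle (or Harnack applied to $\psi$ itself) gives $\psi > 0$, so that $\psi \in C_{\mathcal{L}-\zeta_c(D)}(D)$ and $\zeta_c(D) \in \mathbb{S}_{\mathcal{L}}(D)$.

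For part (ii), constancy of $P$ makes the eigenvalue equation $(\mathcal{L}-\zeta)\psi = 0$ equivalent to $(\mathcal{L}_0-(\zeta-P))\psi = 0$, giving the shift $\mathbb{S}_{\mathcal{L}}(D) = P + \mathbb{S}_{\mathcal{L}_0}(D)$ and the identification $C_{\mathcal{L}-\zeta}(D) = C_{\mathcal{L}_0-(\zeta-P)}(D)$. Recurrence of the generalized martingale problem for $\mathcal{L}_0$ on $D$ is equivalent (see Chapter 4 and in particular Theorems 4.3.3--4.3.4 of \cite{pinsky1995positive}) to $\mathcal{L}_0$ being critical on $D$ with generalized principal eigenvalue equal to $0$ and a one-dimensional cone of positive $\mathcal{L}_0$-harmonic functions spanned by the constant $1$. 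Translating by $P$ yields the stated expression for $\zeta_c(D)$ and $|C_{\mathcal{L}-\zeta_c(D)}(D)|=1$, with the surviving ground state given explicitly by $\psi \equiv 1$.

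The main technical obstacle is the uniform Harnack-plus-Schauder compactness in part (i): one needs Harnack constants that do not degenerate as the zeroth-order coefficient varies along the minimizing sequence, and Schauder constants that remain uniformly controlled on compact subsets. Both are standard under Assumption \ref{Assump.Holder} once one observes that the only coefficient of $\mathcal{L}-\zeta_n$ that changes with $n$ is the bounded additive shift $\zeta_n$, which preserves local H\"older norms and local uniform ellipticity with uniform constants. The remaining work in part (ii) is bookkeeping of the shift and a direct quotation of the critical/recurrence dictionary from \cite{pinsky1995positive}.
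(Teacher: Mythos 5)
Your proposal is correct, and part (ii) essentially mirrors the paper's argument: both shift by the constant $P$, quote Pinsky's recurrence/criticality dictionary (his Chapter~4, Theorem~3.3, i.e.\ the paper's Proposition~\ref{Prop:Operator.Subcritical.BoundedPotential}) to deduce that $\mathcal{L}_0$ is critical with generalized principal eigenvalue $0$, and then quote Chapter~4, Theorem~3.4 for one-dimensionality of the critical cone. For part (i), however, you take a more elementary and self-contained route than the paper does. The paper simply cites its Propositions~\ref{Prop:GreensFunc.SolExist} and \ref{Prop:Eigenvalue.Spectrum.Structure} (the latter being Pinsky's Theorem~4.3.2 together with Exercise~4.16), which already assert that the infimum $\zeta_c(D)$ is a real number and that $\mathcal{L}-\zeta_c(D)$ is subcritical or critical, hence admits positive eigenfunctions. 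You instead unfold the internal mechanism of that theorem: choose a minimizing sequence $\zeta_n\downarrow\zeta_c(D)$ with normalized $\psi_n\in C_{\mathcal{L}-\zeta_n}(D)$, exploit the fact that only the zeroth-order coefficient changes (by a bounded amount) so Harnack and interior Schauder constants are uniform on compacts, and pass to a $C^2_{\mathrm{loc}}$ subsequential limit that is positive by the uniform Harnack lower bounds. This buys transparency and avoids the black-box citation, at the cost of one small point you leave implicit: the finiteness of $\zeta_c(D)$ (i.e.\ that $\mathbb{S}_{\mathcal{L}}(D)$ is nonempty and bounded below) must itself be established before you can speak of a minimizing sequence $\zeta_n\downarrow\zeta_c(D)$. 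Under Assumption~\ref{Assump.Holder}(iii) the potential $P$ is bounded, which is exactly what Pinsky's Exercise~4.16 uses to conclude $\zeta_c(D)>-\infty$ (nonemptiness of $\mathbb{S}_{\mathcal{L}}(D)$ for large $\zeta$ follows from subcriticality of $\mathcal{L}-\zeta$ once $\zeta>\sup P$); a one-line remark to this effect would make your compactness argument fully airtight.
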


\begin{remark}
The quantity $\zeta_c(D)$ of \eqref{crit_eig} is commonly referred to as the \emph{critical eigenvalue} of the operator $\mathcal{L}$ on $D$.
\end{remark}

The structure of the eigenspaces $C_{\mathcal{L - \zeta}}(D)$ can differ widely depending on the choice of the dimension $k$, the restrictions on the operator $\mathcal{L}$, and the domain $D$. The case $k=1$ corresponds to having a single factor and leads to eigenspaces of dimension at most $2$.

\begin{proposition}\label{Prop.Dim2.One}
Suppose $\mathcal{L}$ satisfies Assumption \ref{Assump.Holder} on a domain $D\subset\rr$. Then, the number of extreme points of the convex set $C_{\mathcal{L}-\zeta}(D)$ is $2$ for all $\zeta>\zeta_c(D)$ and belongs to $\{1,2\}$ for $\zeta=\zeta_c(D)$.
\end{proposition}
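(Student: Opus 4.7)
The plan is to identify $C_{\mathcal{L}-\zeta}(D)$ with a convex subset of a one-dimensional affine space and then apply the one-dimensional (sub)criticality theory for second-order linear elliptic operators. Since $D\subseteq\rr$ and $\mathcal{L}$ is a locally uniformly elliptic second-order linear operator with locally H\"older continuous coefficients, $(\mathcal{L}-\zeta)\psi=0$ is a regular second-order linear ODE on $D$, whose space of $C^{2,\eta}(D)$ solutions is two-dimensional. I would fix a basis $\phi_1,\phi_2$ determined by $\phi_1(y_0)=1,\,\phi_1'(y_0)=0$ and $\phi_2(y_0)=0,\,\phi_2'(y_0)=1$, so that every $\psi\in C^{2,\eta}(D)$ satisfying $(\mathcal{L}-\zeta)\psi=0$ and $\psi(y_0)=1$ takes the unique form $\psi_s:=\phi_1+s\,\phi_2$, $s\in\rr$. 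This identifies $C_{\mathcal{L}-\zeta}(D)$ with the convex set $I:=\{s\in\rr:\,\psi_s>0\text{ on }D\}$, reducing the problem to counting the extreme points of $I$.

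The second step is to show that $I$ is a nonempty closed bounded interval. Nonemptiness follows from $\zeta\in\mathbb{S}_{\mathcal{L}}(D)$. The main tool for closedness and boundedness is the Harnack inequality for non-negative solutions of second-order linear elliptic equations, which holds under Assumption \ref{Assump.Holder} and implies that any non-negative $C^{2,\eta}$ solution of $(\mathcal{L}-\zeta)u=0$ on $D$ is either strictly positive or identically zero. Applying this: if $s_n\in I$ and $s_n\to s$, then $\psi_s\ge0$ satisfies $\psi_s(y_0)=1\neq 0$, so $\psi_s>0$ and $s\in I$; and if $s_n\in I$ with $s_n\to+\infty$, then $\psi_{s_n}/s_n=\phi_1/s_n+\phi_2$ is eventually positive and converges locally uniformly to $\phi_2$, yielding $\phi_2\ge 0$ on $D$, so $\phi_2(y_0)=0$ forces $\phi_2\equiv 0$, contradicting $\phi_2'(y_0)=1$. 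A symmetric argument bounds $I$ from below, so $I=[s_-,s_+]$, which has $1$ extreme point if $s_-=s_+$ and $2$ if $s_-<s_+$.

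The main obstacle is the strict inequality $s_-<s_+$ for every $\zeta>\zeta_c(D)$, which I would handle via the theory in \cite[Chapter 4]{pinsky1995positive}: by definition of $\zeta_c(D)$, the operator $\mathcal{L}-\zeta$ is subcritical on $D$ for each $\zeta>\zeta_c(D)$. Subcriticality on a one-dimensional domain produces two linearly independent minimal positive solutions $\phi_-,\phi_+$ associated with the two ends of $D$; after normalization at $y_0$, these correspond exactly to the endpoints $s_\pm$ of $I$, and their linear independence is equivalent to $s_-<s_+$. Finally, for $\zeta=\zeta_c(D)$ the operator $\mathcal{L}-\zeta_c(D)$ is either critical (unique positive solution up to scaling, hence $s_-=s_+$ and one extreme point) or still subcritical (giving $s_-<s_+$ and two extreme points), which accounts for the inclusion in $\{1,2\}$.
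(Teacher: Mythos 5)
Your proposal is correct and follows essentially the same route as the paper, whose proof consists only of citations to Pinsky's one-dimensional criticality theory (Chapter 4, Theorem 3.4, Remark 2 on p.~149, and Exercise 4.16) together with Proposition \ref{Prop:Eigenvalue.Spectrum.Structure}; you unpack what those citations deliver by parametrizing $C_{\mathcal{L}-\zeta}(D)$ as a closed bounded interval $I\subset\rr$ via the two-dimensional ODE solution space, with Harnack's inequality handling closedness and boundedness, but the decisive step --- that subcriticality forces two linearly independent minimal positive solutions, hence $s_-<s_+$ --- is the same piece of Pinsky's theory the paper invokes.
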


\begin{remark}
Proposition \ref{Prop.Dim2.One} reveals that, in the setting of Theorem \ref{Thm.Suff.Cond} with $k=1$, one can determine the pairing $(\Psi,\nu)$ via a three-step procedure: first, one recovers $\nu$ by a one-dimensional Laplace inversion of $u(\cdot,y_0)$; second, one finds $\Psi(\zeta,y_1)\,\nu(\mathrm{d}\zeta)$ by a one-dimensional Laplace inversion of $u(\cdot,y_1)$ for an arbitrary $y_1\in D\backslash\{y_0\}$; third, for all $\zeta\ge\zeta_c(D)$, one solves the second-order linear \emph{ordinary} differential equation for $\Psi(\zeta,\cdot)$ with the obtained boundary conditions at $y_0$ and $y_1$ to end up with the selection $\Psi$.
\end{remark}
When $k \geq 2$, the variability in the dimensionality of the eigenspaces is illustrated by the following two scenarios, in which the eigenspaces have dimensions $1$ and $\infty$, respectively.

\begin{definition}\label{assump.Fuchs}
A potential $P(\cdot)$ on $\mathbb{R}^k$ is called \emph{principally radially symmetric} if
\begin{equation}
P = P_0 + P_1,
\end{equation}
where the functions $P_0$ and $P_1$ are locally integrable to power $d$ for some $d>k/2$, with $P_0$ being radially symmetric ($P_0(y)=\tilde{P}_0(|y|)$ for some $\tilde{P}_0$), and $P_1$ vanishing outside of a compact set.
\end{definition}

\begin{proposition}\label{Prop.Dim1.Multi}
Consider a positive $\phi\in C^{2,\eta}(\rr^k)$ with bounded $\frac{\nabla\phi}{\phi}$ and $\frac{\Delta\phi}{\phi}$, as well as an operator $\tilde{\mathcal{L}}:=\Delta + P(y)$ on $\rr^k$ with a locally $\eta$-H\"older  continuous bounded principally symmetric potential $P(\cdot)$. Then, $\mathcal{L}:=\frac{1}{\phi}\tilde{\mathcal{L}}\phi$ has the property $|C_{\mathcal{L}-\zeta}(\rr^k)|=1$ for any $\zeta>\zeta_c(\rr^k)$ such that
\begin{equation}\label{eq:Prop.Dim1.Multi.IntegralCond}
  \int_{1}^{\infty}t^{k-3}g_0(t)^2\bigg(\int_{t}^{\infty} s^{1-k} g_0(s)^{-2} \dif s \bigg) \dif t = \infty,
\end{equation}
where $g_0$ is the unique solution of
\begin{equation}\label{eq:Prop.Dim1.Multi.ODE}
g_0''(r)+\frac{k-1}{r}g_0'(r)+\big(\zeta-\tilde{P}_0(r)\big)g_0(r)=0\;\;\textrm{on}\;\;(0,\infty),\quad g_0(r) = 1+o(r)\;\;\textrm{as}\;\;r\downarrow 0.
\end{equation}

\end{proposition}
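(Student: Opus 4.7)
The approach is to reduce the question, via successive transformations, to a one-dimensional integrability criterion. First, since $\mathcal{L}=\phi^{-1}\tilde{\mathcal{L}}\phi$ and the logarithmic derivatives $\nabla\phi/\phi$, $\Delta\phi/\phi$ are bounded, conjugation by $\phi$ induces a bijection $\psi\mapsto\phi\psi/\phi(y_0)$ between the normalized positive eigenspaces $C_{\mathcal{L}-\zeta}(\mathbb{R}^k)$ and $C_{\tilde{\mathcal{L}}-\zeta}(\mathbb{R}^k)$. The coefficients of $\mathcal{L}$ inherit Assumption \ref{Assump.Holder} from those on $\phi$ and $P$, so it suffices to prove $|C_{\tilde{\mathcal{L}}-\zeta}(\mathbb{R}^k)|=1$ for $\tilde{\mathcal{L}}=\Delta+P$.

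Next, by the Choquet-type representation of positive eigenfunctions via the minimal Martin boundary of $\tilde{\mathcal{L}}-\zeta$ (cf.\ \cite{pinsky1995positive}), the claim becomes that this boundary consists of a single point. Because $P_1$ has compact support, the behavior of positive solutions at infinity is governed entirely by $P_0$, and I would appeal to the stability of the minimal Martin boundary under compactly supported potential perturbations to replace $P$ by the purely radial $P_0$. The resulting operator is radially symmetric, and under the assumption $\zeta>\zeta_c(\mathbb{R}^k)$ the regular radial solution $g_0$ of \eqref{eq:Prop.Dim1.Multi.ODE}, picked out by the condition $g_0(r)=1+o(r)$ as $r\downarrow 0$, is positive.

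Performing Doob's $h$-transform of $\Delta+P_0-\zeta$ by $g_0$ then produces a radially symmetric Markov generator on $\mathbb{R}^k$ whose radial part is a 1D diffusion on $(0,\infty)$ with scale density $s'(r)\propto r^{1-k}g_0(r)^{-2}$ and speed density $m'(r)\propto r^{k-1}g_0(r)^2$, and whose angular part is spherical Brownian motion time-changed by $1/r^2$. Triviality of the Martin boundary at infinity is equivalent to the angular component mixing on $S^{k-1}$ faster than the radial component escapes, which after integration by parts against $s'$ and $m'$, and equivalently under the Liouville substitution $u(r)=r^{(k-1)/2}g(r)$ that recasts \eqref{eq:Prop.Dim1.Multi.ODE} as a 1D Schr\"odinger equation with centrifugal potential $(k-1)(k-3)/(4r^2)$, takes exactly the form of the double integral \eqref{eq:Prop.Dim1.Multi.IntegralCond}.

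The main obstacle will be executing this last step rigorously. Feller's boundary classification by itself only yields the stronger single-integral condition $\int^\infty r^{1-k}g_0(r)^{-2}\,\mathrm{d}r=\infty$ for pure radial recurrence; the weaker double-integral criterion \eqref{eq:Prop.Dim1.Multi.IntegralCond} arises only once angular mixing is allowed to compensate for radial transience, and matching the precise exponents (in particular, the $r^{-2}$ discrepancy between $m'$ and the outer integrand) requires careful bookkeeping that is cleanest in the Liouville-transformed Schr\"odinger formulation via a Hartman--Wintner type calculation. A secondary delicate point is verifying that the compactly supported perturbation $P_1$ alters neither the ground state at infinity nor the dimension of the positive cone, which follows from a Harnack-type comparison of $\tilde{\mathcal{L}}-\zeta$ and $\Delta+P_0-\zeta$ outside the support of $P_1$. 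Combined with the conjugation of the first paragraph, this yields $|C_{\mathcal{L}-\zeta}(\mathbb{R}^k)|=1$.
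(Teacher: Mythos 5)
Your first paragraph reproduces the paper's opening move: conjugation by $\phi$ gives a bijection $\psi\mapsto\phi\psi/\phi(y_0)$ between $C_{\mathcal{L}-\zeta}(\rr^k)$ and $C_{\tilde{\mathcal{L}}-\zeta}(\rr^k)$, so it suffices to show $|C_{\tilde{\mathcal{L}}-\zeta}(\rr^k)|=1$ for $\tilde{\mathcal{L}}=\Delta+P$. At that point, however, the paper's proof stops: by Proposition~\ref{Prop:Eigenvalue.Spectrum.Structure} the operator $\tilde{\mathcal{L}}-\zeta$ is subcritical for $\zeta>\zeta_c(\rr^k)$, and uniqueness of the normalized positive eigenfunction under the integral condition \eqref{eq:Prop.Dim1.Multi.IntegralCond} is then precisely \cite[Theorem~5.3]{murata1986structure}. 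The whole point of the principal-radial-symmetry hypothesis (Definition~\ref{assump.Fuchs}) and of the exact form of \eqref{eq:Prop.Dim1.Multi.IntegralCond} is to match Murata's hypotheses.

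Your remaining paragraphs therefore attempt to re-derive Murata's theorem rather than invoke it. The plan you outline --- Martin-boundary reduction, stability under the compactly supported perturbation $P_1$, Doob $h$-transform by the regular radial solution $g_0$, then a Feller-type analysis of the boundary at infinity --- is indeed roughly the architecture of such proofs, and you are right that the double-integral criterion is strictly weaker than radial recurrence and encodes the angular component mixing faster than the radial one escapes. But you explicitly flag, without resolving, the two hardest steps: passing from the single-integral Feller test to the double-integral criterion \eqref{eq:Prop.Dim1.Multi.IntegralCond}, and showing that the compactly supported perturbation $P_1$ does not alter the dimension of the positive cone. Those two steps are the real content of \cite[Theorem~5.3]{murata1986structure}, so the argument as written has a genuine gap exactly there. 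The efficient fix --- and what the paper does --- is to cite Murata's theorem directly rather than reconstruct it.
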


In the situation of Proposition \ref{Prop.Dim1.Multi}, we must pick $\Psi(\zeta,\cdot)$ as the unique element of $C_{\mathcal{L}-\zeta}(\rr^k)$. On the other hand, in the case of a multidimensional factor process on a bounded domain $D$ with a Lipschitz boundary, the eigenspaces are infinite-dimensional.


\begin{proposition}\label{Prop.DimInfty.Multi}
Let $D\subset\mathbb{R}^k$, $k\ge 2$ be a bounded domain with a Lipschitz boundary and $\mathcal{L}$ satisfy (i)-(iii) in Assumption \ref{Assump.Holder} with $D$ instead of $\overbar{D'}$ and $D'$. Then, the convex set $C_{\mathcal{L}-\zeta}(D)$ has infinitely many extreme points for all $\zeta>\zeta_c(D)$.
\end{proposition}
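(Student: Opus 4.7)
The plan is to invoke the Martin boundary theory for the (strictly subcritical) operator $\mathcal{L}-\zeta$ on $D$, identify the extreme points of $C_{\mathcal{L}-\zeta}(D)$ with the minimal Martin boundary, and observe that for a bounded Lipschitz domain with coefficients as in our assumption, this minimal boundary is homeomorphic to the Euclidean boundary $\partial D$ (with every point minimal), which is uncountable for $k\ge 2$.

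Concretely, since $\zeta>\zeta_c(D)$, the operator $\mathcal{L}-\zeta$ is subcritical on $D$ in the sense of Pinsky's criticality theory (see \cite[Chapter 4]{pinsky1995positive}), so it admits a positive Green's function $G_\zeta(\cdot,\cdot)$ on $D$. I would then form the Martin kernel
\[
K_\zeta(x,y):=\frac{G_\zeta(x,y)}{G_\zeta(y_0,y)},\qquad x,y\in D,\ y\neq y_0,
\]
construct the associated Martin compactification $\overline{D}^{\,M}_\zeta$, and continuously extend $K_\zeta(x,\cdot)$ to the Martin boundary $\partial^M_\zeta D:=\overline{D}^{\,M}_\zeta\setminus D$. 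The Martin representation theorem (see \cite[Chapter 7]{pinsky1995positive}) then asserts that every $\psi\in C_{\mathcal{L}-\zeta}(D)$ admits a unique representation as an integral of $K_\zeta(\cdot,\xi)$ against a positive finite Borel measure supported on the minimal Martin boundary $\partial^{M,\mathrm{min}}_\zeta D$. Since the normalization $K_\zeta(y_0,\xi)=1$ holds for every boundary point $\xi$, the extreme points of the convex set $C_{\mathcal{L}-\zeta}(D)$ are precisely $\{K_\zeta(\cdot,\xi):\xi\in\partial^{M,\mathrm{min}}_\zeta D\}$. It thus suffices to show that $\partial^{M,\mathrm{min}}_\zeta D$ is infinite.

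For this last step I would invoke the boundary Harnack principle for second-order elliptic operators with bounded H\"older continuous coefficients on bounded Lipschitz domains---originally due to Ancona and reproduced, e.g., in \cite[Chapter 7]{pinsky1995positive}---which implies that $\overline{D}^{\,M}_\zeta$ is homeomorphic to the Euclidean closure $\overline{D}$ and that every Euclidean boundary point is minimal. Under the hypotheses of the proposition, absorbing $-\zeta$ into the potential preserves the assumptions, and (i)--(iii) of Assumption \ref{Assump.Holder} applied with $D$ in place of $\overbar{D'},D'$ furnish uniform ellipticity on $D$ together with bounded H\"older continuous coefficients, so that the above results apply directly. Since $D\subset\mathbb{R}^k$ with $k\ge 2$ is a bounded domain with Lipschitz boundary, $\partial D$ is uncountable; transferring this through the homeomorphism yields infinitely many extreme points of $C_{\mathcal{L}-\zeta}(D)$. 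The main obstacle is citing a sufficiently general version of the Ancona-type boundary Harnack principle that covers non-self-adjoint elliptic operators with a H\"older continuous potential---rather than a purely divergence-form symmetric one---but this extension is available in the literature referenced above.
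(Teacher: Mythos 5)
Your proof takes essentially the same route as the paper: identify extreme points of $C_{\mathcal{L}-\zeta}(D)$ with the minimal Martin boundary via the uniqueness in the Martin representation, then use Ancona's boundary Harnack theory on bounded Lipschitz domains to show that the minimal Martin boundary coincides with the Euclidean boundary $\partial D$, which is infinite for $k\ge 2$. The paper simply cites Ancona's Theorems 6.1 and 6.3 as a packaged statement (its Proposition \ref{Ancona}) rather than unpacking the Green's function, Martin kernel, and compactification construction as you do; the concern you raise about non-self-adjoint operators with H\"older coefficients is precisely the generality covered by Ancona's results cited there, so the argument closes.
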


Thus, one cannot assert that the number of extreme points of $C_{\mathcal{L}-\zeta}(D)$ is finite in the generality of Assumption \ref{Assump.Holder}. Therefore, the procedure of Remark \ref{rmk:inv.Lapl} cannot always be reduced to a finite number of Laplace inversions. In such cases, we propose to determine the selection $\Psi$ on a finite number of grid points $y\in D$.

\section{Proof of Theorem \ref{Thm.FPP.Construct} and a new Widder's theorem}\label{sec:FPPs}

The goal of this section is to prove Theorem \ref{Thm.FPP.Construct}. Recall that we are interested in separable power factor form local FPPs defined in Remark \ref{rmk:sep.factor.form}. We start by focusing on the function $V$ and give a sufficient condition for $V(t,x,Y_t)$ to be a local FPP.

\begin{proposition}\label{Thm.FPP.Sufficient}
Under Assumption \ref{assump.Var} let $V:\,[0,\infty)\times(0,\infty)\times D\to\rr$ be continuously differentiable in $t$ (its first argument) and twice continuously differentiable in $x$ and $y$ (the second and third arguments). Suppose further that $V$ is strictly concave and increasing in $x$ and a classical solution of the HJB equation
\begin{equation}\label{eq:Thm.FPP.Sufficient.PDE.HJB}
\partial_t V + \mathcal{L}_y V - \frac{1}{2}\,
\frac{|\lambda\,\partial_x V + \rho\kappa\,\partial_x \nabla_y V|^2}{\partial_{xx} V} = 0 \quad\text{on}\quad [0,\infty)\times(0,\infty)\times D,
\end{equation}
where $\mathcal{L}_y$ is the generator of the factor process $Y$. Then, $V(t,x, Y_t)$ is a local FPP. Moreover, the corresponding optimal portfolio allocations $\pi^*$ among the $n$ stocks are of a feedback form and characterized by
\begin{equation}\label{eq:Thm.FPP.Sufficient.OptPortfolio}
\sigma(Y_t)\pi_t^* = -\frac{\lambda(Y_t)\,\partial_x V(t,X_t^{\pi^*},Y_t) + \rho\kappa(Y_t)\,\partial_x \nabla_y V(t,X_t^{\pi^*},Y_t)}{X_t^{\pi^*}\partial_{xx}V(t,X_t^{\pi^*},Y_t)}.
\end{equation}

\end{proposition}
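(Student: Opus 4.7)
The plan is to apply Itô's formula to $V(t, X_t^\pi, Y_t)$ for an arbitrary admissible $\pi$, identify the drift as a quadratic form in $\sigma(Y_t)\pi_t$, and invoke the HJB equation to deduce the (super)martingale properties required by Definition \ref{defn:FPP}.

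First, from the decomposition $B = \rho^T W + A^T W^\perp$, together with \eqref{eq:Dynamic.Wealth} and \eqref{eq:Dynamic.Factor}, the cross-variation of $X^\pi$ and $Y$ is the $k$-valued process
\begin{equation*}
\mathrm{d}\langle X^\pi, Y\rangle_t = X_t^\pi\,\kappa(Y_t)^T \rho^T \sigma(Y_t)\pi_t\,\mathrm{d}t,
\end{equation*}
while $\mathrm{d}\langle X^\pi\rangle_t = (X_t^\pi)^2|\sigma(Y_t)\pi_t|^2\,\mathrm{d}t$ and $\mathrm{d}\langle Y\rangle_t = \kappa(Y_t)^T\kappa(Y_t)\,\mathrm{d}t$. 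Setting $\theta_t := X_t^\pi \sigma(Y_t)\pi_t$ and evaluating every partial derivative of $V$ at $(t, X_t^\pi, Y_t)$, Itô's formula then expresses $V(t, X_t^\pi, Y_t)$ as a local martingale plus $\int_0^t \mathcal{D}_s\,\mathrm{d}s$ with drift
\begin{equation*}
\mathcal{D}_t \;=\; \partial_t V + \mathcal{L}_y V + \theta_t^T\!\left(\lambda(Y_t)\,\partial_x V + \rho\kappa(Y_t)\,\partial_x\nabla_y V\right) + \tfrac{1}{2}\,|\theta_t|^2\,\partial_{xx}V,
\end{equation*}
where the $Y$-only terms $\alpha^T\nabla_y V + \tfrac12\mathrm{tr}(\kappa^T\kappa\,\nabla_y^2 V)$ assemble into $\mathcal{L}_y V$. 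The admissibility condition \eqref{eq:AdmissibilitySet} is exactly what is needed to make the local-martingale part well-defined.

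Second, since $\partial_{xx}V<0$ by strict concavity of $V$ in $x$, the quadratic form in $\theta_t$ is maximized at
\begin{equation*}
\theta_t^* \;=\; -\frac{\lambda(Y_t)\,\partial_x V + \rho\kappa(Y_t)\,\partial_x\nabla_y V}{\partial_{xx}V},
\end{equation*}
and the maximum value of $\mathcal{D}_t$ coincides with the left-hand side of the HJB equation \eqref{eq:Thm.FPP.Sufficient.PDE.HJB}, hence is zero. Consequently $\mathcal{D}_t \leq 0$ for every $\pi\in\mathcal{A}$, with equality precisely when $X_t^\pi \sigma(Y_t)\pi_t = \theta_t^*$, which is \eqref{eq:Thm.FPP.Sufficient.OptPortfolio} for $\pi=\pi^*$. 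After a standard localization by stopping times driving the local-martingale part to a true martingale and controlling the integrability implied by \eqref{eq:AdmissibilitySet}, we obtain that $V(t, X_t^\pi, Y_t)$ is an $(\mathcal{F}_t)$-local supermartingale for every $\pi\in\mathcal{A}$, and an $(\mathcal{F}_t)$-local martingale whenever $\pi^*\in\mathcal{A}$ satisfies the feedback relation \eqref{eq:Thm.FPP.Sufficient.OptPortfolio}. Together with the built-in strict concavity and monotonicity in $x$, this gives (i)--(iii) of Definition \ref{defn:FPP}.

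The only nontrivial step is the identification of the cross-variation term, from which the combination $\rho\kappa(\cdot)$ appearing in both \eqref{eq:Thm.FPP.Sufficient.PDE.HJB} and \eqref{eq:Thm.FPP.Sufficient.OptPortfolio} originates; once this bookkeeping is in place, the argument reduces to a completion-of-squares in $\theta_t$ matched pointwise against the HJB equation.
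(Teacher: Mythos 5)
Your proof follows the same route as the paper: apply It\^o's formula to $V(t,X^\pi_t,Y_t)$, identify the drift as a concave quadratic in $\theta_t=X^\pi_t\sigma(Y_t)\pi_t$, and use the HJB equation \eqref{eq:Thm.FPP.Sufficient.PDE.HJB} to conclude the drift is non-positive, vanishing exactly at the completing-the-square maximizer. The cross-variation bookkeeping producing the combination $\rho\kappa(\cdot)$ is correct, and your maximization-over-$\theta_t$ argument is algebraically the same as the paper's explicit negative-complete-square rewriting in \eqref{eq:drift.coeff}.

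There is, however, one point you have glossed over that the paper explicitly flags, and it is not cosmetic. You maximize the drift over $\theta_t$ ranging over all of $\rr^{d_W}$, but the admissible $\theta_t$ are constrained to the range of $\sigma(Y_t)$: $\theta_t=X^\pi_t\sigma(Y_t)\pi_t$. The inequality $\mathcal{D}_t\le 0$ survives this (a constrained supremum is at most the unconstrained one), so the supermartingale claim is fine. But to establish condition (iii) of Definition \ref{defn:FPP} you must show that the unconstrained maximizer $\theta^*_t$ actually lies in $\mathrm{range}(\sigma(Y_t))$, so that a $\pi^*$ solving the feedback equation \eqref{eq:Thm.FPP.Sufficient.OptPortfolio} exists. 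This is exactly where Assumption \ref{assump.Var} enters: $\lambda(\cdot)=(\sigma(\cdot)^T)^{-1}\mu(\cdot)$ lies in the range of $\sigma(\cdot)$ by construction, and the columns of $\rho$ lie in that range by hypothesis, so that $\sigma(\cdot)\sigma(\cdot)^{-1}\rho=\rho$ (Remark \ref{rmk:sigma.sigma.inverse}) guarantees $\rho\kappa(\cdot)\partial_x\nabla_y V$ is in $\mathrm{range}(\sigma(\cdot))$ as well. Without this verification, your phrase ``whenever $\pi^*\in\mathcal{A}$ satisfies the feedback relation'' leaves the existence of such a $\pi^*$ unaddressed, and the local-FPP conclusion would be incomplete. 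Add this one observation and the argument is complete and matches the paper's.
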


\begin{proof}
For the former statement, one only needs to repeat the derivation of \cite[equation (1.6)]{shkolnikov2015asymptotic} mutatis mutandis and to use $\sigma(\cdot)\sigma(\cdot)^{-1}\rho=\rho$ (see Remark \ref{rmk:sigma.sigma.inverse}). For the latter statement, we apply It\^o's formula to $V(t,X^\pi_t,Y_t)$ and substitute $\frac{1}{2}\,
\frac{|\lambda\,\partial_x V + \rho\kappa\,\partial_x \nabla_y V|^2}{\partial_{xx} V}$ for $\partial_t V + \mathcal{L}_y V$ to conclude that the drift coefficient of $V(t,X^\pi_t,Y_t)$ is the negative of
\begin{equation}\label{eq:drift.coeff}
\frac{1}{2}\bigg|\frac{\lambda(Y_t)\partial_x V(t,X_t^\pi,Y_t)\!+\!\rho\kappa(Y_t)\partial_x\nabla_y V(t,X_t^\pi,Y_t)}{(-\partial_{xx}V(t,X_t^\pi,Y_t))^{1/2}}-X^\pi_t\sigma(Y_t)\pi_t(-\partial_{xx}V(t,X_t^\pi,Y_t))^{1/2}\bigg|^2\!.
\end{equation}
The process $V(t,X^\pi_t,Y_t)$ is a local martingale if and only if the expression in \eqref{eq:drift.coeff} vanishes, which happens if and only if
\eqref{eq:Thm.FPP.Sufficient.OptPortfolio} holds.
\end{proof}

\begin{remark}
The process $V(t,x,Y_t)$ of Proposition \ref{Thm.FPP.Sufficient} is a true FPP if $\,V(t,X^\pi_t,Y_t)$ is a true supermartingale for every $\pi\in{\mathcal A}$ and a true martingale for every optimal portfolio allocation $\pi^*$ of \eqref{eq:Thm.FPP.Sufficient.OptPortfolio}. In view of Fatou's lemma, the supermartingale property is fulfilled if $\inf_{s\in[0,t]} V(s,X^\pi_s,Y_s)$ is integrable for all $t\ge0$ and $\pi\in{\mathcal A}$. The martingale property is valid if the diffusion coefficients $\partial_x V(t,X^{\pi^*}_t,Y_t)X^{\pi^*}_t(\sigma(Y_t)\pi^*_t)^T$, $\nabla_y V(t,X^{\pi^*}_t,Y_t)\kappa(Y_t)^T$ of $V(t,X^{\pi^*}_t,Y_t)$ are $\mathrm{d}t\times\mathrm{d}\pp$-square integrable on each $[0,t]\times\Omega$.
\end{remark}

The HJB equation \eqref{eq:Thm.FPP.Sufficient.PDE.HJB} is a fully non-linear PDE and one does not expect to find explicit formulas for its solutions in general. However, for initial conditions of separable power type and under the Assumption \ref{assump.Corr}, the HJB equation \eqref{eq:Thm.FPP.Sufficient.PDE.HJB} can be linearized.

\begin{proposition}\label{lem:FPP.Exists.Parab.PDE}
Let Assumption \ref{assump.Corr} be satisfied, $\Gamma=\frac{1-\gamma}{\gamma}$, and $q=\frac{1}{1+\Gamma p}$. Then, the HJB equation \eqref{eq:Thm.FPP.Sufficient.PDE.HJB} with an initial condition $V(0,x,y)=\gamma^\gamma \frac{x^{1-\gamma}}{1-\gamma}h(y)^q$, where $h\!>\!0$, has a classical solution in separable power form, $V(t,x,y)\!=\!\gamma^\gamma \frac{x^{1-\gamma}}{1-\gamma}g(t,y)$, with $g>0$ if and only if there exists a positive solution to the linear PDE problem
\begin{equation}\label{problem.for.Phi}
\partial_t u+\mathcal{L}u=0\;\;\text{on}\;\;[0,\infty)\times D,\quad u(0,\cdot)=h
\end{equation}
posed in the ``wrong'' time direction. Hereby, $\mathcal{L}$ is the linear elliptic operator of the second order with the coefficients of \eqref{L_coeff}. In that case, the two solutions are related through
\begin{equation}\label{eq:lem:FPP.Exists.Parab.PDE.Explicit}
V(t,x,y)=\gamma^\gamma\frac{x^{1-\gamma}}{1-\gamma}u(t,y)^q.
\end{equation}
\end{proposition}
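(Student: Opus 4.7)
The plan is to prove both directions of the equivalence via two successive substitutions—first the separable power ansatz $V \mapsto g$, then the distortion $g \mapsto u^q$—both of which are reversible bijections on positive, $C^{1,2}$ functions. Since $\gamma > 0$, $\gamma \neq 1$, and $p \in [0,1]$ force $q = 1/(1+\Gamma p) > 0$ (check the two cases $\gamma < 1$ and $\gamma > 1$ separately: in both, $\gamma + (1-\gamma)p > 0$), the map $u \mapsto u^q$ is a smooth bijection of $(0,\infty)$ onto itself, so positive classical solutions correspond one-to-one with matched initial data $u(0,\cdot) = h \leftrightarrow g(0,\cdot) = h^q$.

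\textbf{Step 1 (ansatz reduction).} I plug $V(t,x,y) = \gamma^\gamma \frac{x^{1-\gamma}}{1-\gamma} g(t,y)$ into \eqref{eq:Thm.FPP.Sufficient.PDE.HJB}, using $\partial_x V = \gamma^\gamma x^{-\gamma} g$, $\partial_{xx} V = -\gamma^{\gamma+1} x^{-\gamma-1} g$, and $\partial_x\nabla_y V = \gamma^\gamma x^{-\gamma} \nabla_y g$. Every term then carries the common factor $\gamma^\gamma x^{1-\gamma}$; after extracting it and multiplying through by $\Gamma = (1-\gamma)/\gamma$, the HJB equation becomes
\begin{equation*}
\partial_t g + \mathcal{L}_y g + \frac{\Gamma}{2g}\bigl|\lambda g + \rho\kappa\nabla_y g\bigr|^2 = 0,
\end{equation*}
where $\mathcal{L}_y$ is the generator of $Y$ with diffusion matrix $a = \kappa^T\kappa$ and drift $\alpha$. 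Expanding the square and invoking Assumption \ref{assump.Corr} in the form $\kappa^T\rho^T\rho\kappa = p\,a$ reduces this to
\begin{equation*}
\partial_t g + \mathcal{L}_y g + \tfrac{\Gamma}{2}\lambda^T\lambda\, g + \Gamma \lambda^T\rho\kappa \nabla_y g + \tfrac{\Gamma p}{2g}\nabla_y g^T a \nabla_y g = 0.
\end{equation*}

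\textbf{Step 2 (distortion).} Next I substitute $g = u^q$. The chain rule yields $\partial_t g = q u^{q-1}\partial_t u$, $\nabla_y g = q u^{q-1}\nabla_y u$, and $\mathcal{L}_y g = q u^{q-1}\mathcal{L}_y u + \tfrac{q(q-1)}{2}u^{q-2}\nabla_y u^T a \nabla_y u$. Dividing by $q u^{q-1}$, the two quadratic-gradient terms combine with coefficient
\begin{equation*}
\tfrac{q-1}{2} + \tfrac{\Gamma p q}{2} \;=\; \tfrac{q(1+\Gamma p) - 1}{2} \;=\; 0,
\end{equation*}
by the defining relation $q(1+\Gamma p) = 1$. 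What remains is the linear equation
\begin{equation*}
\partial_t u + \mathcal{L}_y u + \Gamma\lambda^T\rho\kappa\,\nabla_y u + \tfrac{\Gamma}{2q}\lambda^T\lambda\, u = 0,
\end{equation*}
which matches $\partial_t u + \mathcal{L} u = 0$ exactly, since the coefficients in \eqref{L_coeff} are $a = \kappa^T\kappa$, $b = \alpha + \Gamma\kappa^T\rho^T\lambda$ (using $\lambda^T\rho\kappa\nabla_y u = (\kappa^T\rho^T\lambda)^T\nabla_y u$), and $P = \tfrac{\Gamma}{2q}\lambda^T\lambda$.

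\textbf{Step 3 (reverse direction and conclusion).} Conversely, given a positive classical $u$ satisfying $\partial_t u + \mathcal{L} u = 0$ with $u(0,\cdot) = h$, setting $g := u^q > 0$ and reading the chain-rule computation backward shows $g$ solves the nonlinear PDE of Step 1 with $g(0,\cdot) = h^q$; consequently $V(t,x,y) := \gamma^\gamma \frac{x^{1-\gamma}}{1-\gamma}\,u(t,y)^q$ is a positive classical solution of the HJB equation with the prescribed initial data, establishing \eqref{eq:lem:FPP.Exists.Parab.PDE.Explicit}. The only genuinely delicate point in the whole argument is the cancellation identity in Step 2: it relies simultaneously on the EVE structure $\rho^T\rho = pI_{d_B}$ and on the precise value of the distortion power $q$. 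Everything else is bookkeeping with the chain rule and Assumption \ref{assump.Var}.
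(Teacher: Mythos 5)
Your proof is correct and follows essentially the same two-step reduction (separable power ansatz, then distortion $g = u^q$ with the EVE identity killing the quadratic gradient term) as the paper's proof of Proposition \ref{lem:FPP.Exists.Parab.PDE}, with a welcome extra check that $q>0$. One small slip in the prose: after extracting the common factor $\gamma^\gamma x^{1-\gamma}$ you should multiply through by $1-\gamma$ (not by $\Gamma$) to obtain the displayed equation $\partial_t g + \mathcal{L}_y g + \frac{\Gamma}{2g}\bigl|\lambda g + \rho\kappa\nabla_y g\bigr|^2 = 0$, but the equation itself and everything that follows is right.
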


\begin{proof}
Since we are looking for solutions of the HJB equation \eqref{eq:Thm.FPP.Sufficient.PDE.HJB} in separable power form, we plug in the ansatz $V(t,x,y)=\gamma^{\gamma}\frac{x^{1-\gamma}}{1-\gamma} g(t,y)$ to arrive at
\begin{equation} \label{eq:MultiDPDEg}
\partial_t g + \mathcal{L}_y g + \frac{\Gamma}{2}\lambda^T\lambda g + \Gamma\lambda^T\rho\kappa\nabla_y g + \Gamma\frac{(\nabla_y g)\kappa^T \rho^T \rho\kappa\nabla_y g}{2g}=0,\quad g(0,\cdot)=h^q.
\end{equation}
Next, we employ the distortion transformation $g(t,y) = u(t,y)^q$ and get the PDE
\begin{equation}\label{Phi.a.priori}
\begin{split}
&\;qu^{q-1}\partial_tu
+\frac{1}{2}\sum_{i,j=1}^{k}(\kappa^T\kappa)_{ij} \big(qu^{q-1}\partial_{y_i,y_j}u+q(q-1)u^{q-2}(\partial_{y_i}u)(\partial_{y_j}u)\big) \\
& +\frac{\Gamma}{2}q^2u^{q-2}(\nabla_yu)^T\kappa^T\rho^T\rho\kappa \nabla_yu
+q\big(\alpha+\Gamma\kappa^T\rho^T\lambda\big)^Tu^{q-1}\nabla_yu
+\frac{\Gamma}{2}\lambda^T\lambda u^q=0,
\end{split}
\end{equation}
equipped with the initial condition $u(0,\cdot)=h$. Moreover, the assumed positivity of $g$ translates to $u>0$, so that we can divide both sides of \eqref{Phi.a.priori} by $u^{q-1}$. In addition, we insert the identity $\rho^T\rho = p I_{d_B}$ of Assumption \ref{assump.Corr} to end up with
\begin{equation}\label{Phi.interm}
\begin{split}
&\;\partial_t u+ \frac{1}{2}\sum_{i,j=1}^{k}(\kappa^T\kappa)_{ij}\partial_{y_i y_j}u
+ \big(\alpha + \Gamma\kappa^T \rho^T \lambda\big)^T\nabla_y u
+ \frac{\Gamma}{2q}\lambda^T\lambda u \\
& + \frac{1}{2u}(q+\Gamma pq-1)(\nabla_y u)^T\kappa^T\kappa\nabla_y u=0.
\end{split}
\end{equation}
The crucial observation is now that the non-linear term in the PDE \eqref{Phi.interm} drops out thanks to $q=\frac{1}{1+\Gamma p}$. Hence, $u$ is a positive solution of \eqref{problem.for.Phi}. The converse follows by carrying out the transformations we have used in the reverse order.
\end{proof}

Proposition \ref{lem:FPP.Exists.Parab.PDE} reduces the task of finding solutions of the HJB equation \eqref{eq:Thm.FPP.Sufficient.PDE.HJB} in separable power form to solving the linear PDE problem \eqref{problem.for.Phi} set in the ``wrong'' time direction. The latter has been studied in \cite{widder1963Appel} with $\mathcal{L}$ being the Laplace operator on $\rr^k$ and in \cite{nadtochiy2015optimal} for more general linear second-order elliptic operators on $\rr^k$. We establish subsequently a further generalization of \cite[Theorem 3.12]{nadtochiy2015optimal} that allows for linear second-order elliptic operators on arbitrary domains $D\subset\rr^k$.

\begin{theorem}\label{Prop:GeneralizedWidder}
Under Assumption \ref{Assump.Holder} a function $u:\,\{(0,y_0)\}\cup((0,\infty)\times D)\to(0,\infty)$ is a classical solution of $\partial_t u+\mathcal{L}u=0$ with $u(0,y_0)=1$ if and only if it admits the representation
\begin{equation}\label{eq:Thm.GeneralizedWidder.}
u(t,y) = \int_{\mathbb{S}_\mathcal{L}(D)} e^{-t\zeta}\,\Psi(\zeta,y)\,\nu(\dif \zeta),
\end{equation}
where $\nu$ is a Borel probability measure on $\mathbb{S}_{\mathcal{L}}(D)$ and $\Psi:\,\mathbb{S}_{\mathcal{L}}(D)\times D\to(0,\infty)$ is a $\nu$-Bochner integrable selection of positive eigenfunctions. In this case, the pairing $(\Psi,\nu)$ is uniquely determined by the function $u$.
\end{theorem}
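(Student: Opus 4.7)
The plan is to adapt the Choquet-theoretic strategy of Nadtochiy--Tehranchi \cite{nadtochiy2015optimal} from $\rr^k$ to the general domain $D$.

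For the sufficiency direction, I would note that each integrand $(t,y)\mapsto e^{-t\zeta}\Psi(\zeta,y)$ in \eqref{eq:Thm.GeneralizedWidder.} is itself a classical positive solution of $(\partial_t+\mathcal{L})v=0$, because $\mathcal{L}\Psi(\zeta,\cdot)=\zeta\Psi(\zeta,\cdot)$. Interior parabolic Schauder estimates and the local Harnack inequality (both available under Assumption \ref{Assump.Holder}) dominate $\nabla_y\Psi(\zeta,\cdot)$ and $\nabla_y^2\Psi(\zeta,\cdot)$ on any compact $K\subset D$ by a constant times $\|\Psi(\zeta,\cdot)\|_{K'}$ for a slightly larger compact $K'\subset D$; combined with $\nu$-Bochner integrability this justifies differentiation under the integral sign and yields $(\partial_t+\mathcal{L})u=0$. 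The normalization $u(0,y_0)=\int\Psi(\zeta,y_0)\,\nu(\dif\zeta)=\nu(\mathbb{S}_{\mathcal{L}}(D))=1$ uses $\Psi(\zeta,y_0)=1$ and forces $\nu$ to be a Borel probability measure.

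For the necessity direction, I would introduce the convex set $\mathcal{C}$ of positive classical solutions $u$ of $\partial_tu+\mathcal{L}u=0$ on $(0,\infty)\times D$ admitting a continuous extension to $(0,y_0)$ with $u(0,y_0)=1$, equipped with the topology of local uniform convergence. Parabolic Schauder estimates and the Harnack inequality render $\mathcal{C}$ convex, metrizable and compact. The central step is to identify its extreme points with functions $(t,y)\mapsto e^{-t\zeta}\psi(y)$, where $\zeta\in\mathbb{S}_{\mathcal{L}}(D)$ and $\psi$ is an extreme point of $C_{\mathcal{L}-\zeta}(D)$. To this end I would observe that $u_s(t,y):=u(t+s,y)/u(s,y_0)\in\mathcal{C}$ for every $s>0$, and use the Dirichlet fundamental solution of $\partial_t+\mathcal{L}_0$ on relatively compact smooth subdomains $D_n\uparrow D$ (where $\mathcal{L}_0:=\mathcal{L}-P$) to write $u$ as a convex superposition of such time-translates. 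Extremality of $u$ would then force $u_s=u$ for every $s\ge 0$, yielding the multiplicativity $u(t+s,y)=u(s,y_0)u(t,y)$; together with continuity and positivity this gives $u(t,y)=e^{-t\zeta}\psi(y)$, and substituting into the PDE produces $(\mathcal{L}-\zeta)\psi=0$. Choquet's theorem applied to the compact metrizable convex $\mathcal{C}$ would then furnish, for every $u\in\mathcal{C}$, a Borel probability measure on its extreme points; pushing this forward along the parametrization $e^{-t\zeta}\psi(y)\leftrightarrow(\zeta,\psi)$ and invoking a measurable selection of positive eigenfunctions delivers the pairing $(\Psi,\nu)$ in \eqref{eq:Thm.GeneralizedWidder.}, with $\nu$-Bochner integrability inherited from the aforementioned Harnack bounds. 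Uniqueness of $(\Psi,\nu)$ would follow from the fact that $\mathcal{C}$ is a Choquet simplex.

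The main obstacle is the extreme-point characterization: on a general domain $D$ there is no closed-form heat kernel to drive the time-translation argument. My plan is to exhaust $D$ by relatively compact smooth subdomains $D_n\uparrow D$, on each of which the Dirichlet fundamental solution of $\partial_t+\mathcal{L}_0$ and the associated Harnack and interior regularity estimates are classical, and then to pass to the limit using the monotone convergence of these kernels. The generalized Martin boundary theory for elliptic operators developed in \cite{pinsky1995positive} should underpin both this limiting procedure and the identification of the limiting extreme points with pairs $(\zeta,\psi)$.
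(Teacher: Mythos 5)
Your proposal follows the same fundamental strategy as the paper -- adapt the Choquet-theoretic argument of Nadtochiy and Tehranchi by replacing $\mathbb{R}^n$ with a domain $D$ and replacing the global Harnack inequality with an interior one -- and your sufficiency argument (differentiate under the integral, control derivatives by Schauder plus Harnack, use $\Psi(\zeta,y_0)=1$ to force $\nu$ to be a probability measure) is essentially the paper's. Where you diverge is in how the domain exhaustion is organized. The paper applies the full Nadtochiy--Tehranchi machinery separately on each bounded subdomain $D'$ with $\overbar{D'}\subset D$ (equipping the space of solutions on $\{(0,y_0)\}\cup((0,\infty)\times D')$ with the parabolic-cone topology), obtaining a pairing $(\Psi_{D'},\nu_{D'})$ on each $D'$, and then glues these together via the injectivity of the Laplace transform on one-dimensional sections $u(\cdot,y)$; the measures and eigenfunctions are forced to be consistent across the exhausting sequence, and the global $(\Psi,\nu)$ on $D$ is obtained by passage to the limit of the representations, not of kernels. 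You instead try to run Choquet's theorem once on the solution set $\mathcal{C}$ defined directly over $D$, pushing the exhaustion $D_n\uparrow D$ inside the extreme-point characterization via "monotone convergence of Dirichlet fundamental solutions." That step is a genuine gap: the Dirichlet kernel of $\partial_t+\mathcal{L}_0$ on $D_n$ vanishes on $\partial D_n$ and need not converge to an object that reproduces arbitrary positive solutions on $D$ (which carry no boundary conditions), so the time-translation/superposition identity underlying the extremality argument is not established in the limit without further work. The paper sidesteps exactly this difficulty by never invoking a kernel on $D$: it only ever uses the Nadtochiy--Tehranchi representation on each $D'$ where the coefficients behave as if globally regular, and then patches. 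Your compactness claim for $\mathcal{C}$ over all of $D$ also deserves more care under the merely local Assumption 2.8, whereas the paper's subdomain-by-subdomain route makes the compactness needed transparently local. So: same blueprint, but your variant leaves the kernel-limit and global-compactness steps unjustified, which the paper's Laplace-gluing route is specifically designed to avoid.
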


\begin{proof}
We can adapt the proof of \cite[Theorem 3.12]{nadtochiy2015optimal} to the situation at hand. Consider any subdomain $D'\subset D$ satisfying $y_0\in D'$ and $\overbar{D'}\subset D$. We endow the space of continuous functions on $\{(0,y_0)\}\cup((0,\infty)\times D')$ with the topology of uniform convergence on the compact subsets of the sets
\begin{equation}
\big\{(t,y)\in[0,\infty)\times D':\,t\ge c|y-y_0|^2\big\},\quad c>0.
\end{equation}
Next, we repeat the proofs of \cite[Theorem 3.6, Lemmas 3.7, 3.9, 3.10, and Theorem 3.11]{nadtochiy2015optimal} and the necessity part of the proof of \cite[Theorem 3.12]{nadtochiy2015optimal}, just replacing their $\rr^n$ by our $D'$ and the Harnack's inequality employed therein by the one in \cite[Chapter VII, Corollary 7.42]{Lieberman}, to deduce that every function $u$ as in the statement of the theorem can be expressed as
\begin{equation}\label{D'repr}
\int_\rr e^{-t\zeta}\,\Psi_{D'}(\zeta,y)\,\nu_{D'}(\dif \zeta)\;\;\text{for}\;\; (t,y)\in\{(0,y_0)\}\cup((0,\infty)\times D')\to(0,\infty),
\end{equation}
with a Borel probability measure $\nu_{D'}$ on $\rr$ and $\Psi_{D'}(\zeta,\cdot)\in C_{\mathcal{L}-\zeta}(D')$, $\zeta\in\mathrm{supp}(\nu_{D'})$. This conclusion for a sequence of the described subdomains $D'$ increasing to $D$ and the uniqueness of the Laplace transform (see \cite[Section 30]{billingsley2012probability}) imply that \eqref{D'repr} applies with the same $\nu$ and $\Psi(\zeta,\cdot)\in C_{\mathcal{L}-\zeta}(D')$, $\zeta\in\mathrm{supp}(\nu)$ for all $D'$ in the sequence, so that \eqref{eq:Thm.GeneralizedWidder.} and the uniqueness of the pairing $(\Psi,\nu)$ readily follow. Conversely, proceeding as in the sufficiency part of the proof of \cite[Theorem 3.12]{nadtochiy2015optimal} we find that, for every subdomain $D'$ as above, the right-hand side of \eqref{eq:Thm.GeneralizedWidder.} is a classical solution of $\partial_t u+\mathcal{L}u=0$ on $\{(0,y_0)\}\cup((0,\infty)\times D')$ with $u(0,y_0)=1$. Picking a sequence of subdomains $D'$ increasing to $D$ as before we obtain the sufficiency part of Theorem \ref{Prop:GeneralizedWidder}.
\end{proof}

We now have all the ingredients needed to prove Theorem \ref{Thm.FPP.Construct}.

\begin{proof}[Proof of Theorem \ref{Thm.FPP.Construct}] \textit{(i).} Take a pairing $(\Psi,\nu)$ as specified in point (i) of the theorem. By Proposition \ref{Thm.FPP.Sufficient} it is enough to provide a classical solution $V(t,x,y)=\gamma^\gamma\frac{x^{1-\gamma}}{1-\gamma}g(t,y)$ of the HJB equation \eqref{eq:Thm.FPP.Sufficient.PDE.HJB} with the properties as
in that proposition and satisfying the initial condition
\begin{equation}
V(0,x,y) = \gamma^\gamma \frac{x^{1-\gamma}}{1-\gamma}\bigg(\int_{\mathbb{S}_{\mathcal{L}}(D)} \Psi(\zeta,y)\,\nu(\dif\zeta)\bigg)^q,\quad (x,y)\in(0,\infty)\times D.
\end{equation}
In view of Proposition \ref{lem:FPP.Exists.Parab.PDE}, such a function $V$ can be constructed by solving
\begin{equation}\label{Phi.problem}
\partial_t u+\mathcal{L}u=0\;\;\text{on}\;\;[0,\infty)\times D\;\;\text{with}\;\;u(0,\cdot)=\int_{\mathbb{S}_{\mathcal{L}}(D)} \Psi(\zeta,\cdot)\,\nu(\dif\zeta)
\end{equation}
and inserting the solution $u$ into the right-hand side of \eqref{eq:lem:FPP.Exists.Parab.PDE.Explicit}. By Theorem \ref{Prop:GeneralizedWidder}, the solution $u$ of \eqref{Phi.problem} is given by the right-hand side of \eqref{eq:Thm.GeneralizedWidder.}.

\medskip

Conversely, for a separable power factor form local FPP $\gamma^\gamma\frac{x^{1-\gamma}}{1-\gamma}g(t,Y_t)$ and a portfolio allocation $\pi\in\mathcal{A}$, we apply It\^o's formula to $\gamma^\gamma\frac{(X^\pi_t)^{1-\gamma}}{1-\gamma}g(t,Y_t)$ and infer from the conditions (ii) and (iii) in Definition \ref{defn:FPP} that the resulting drift coefficient must be non-positive for all $\pi\in{\mathcal A}$ and equal to $0$ for any maximizer $\pi^*\in{\mathcal A}$. Equating the maximum of the drift coefficient over all $\pi\in{\mathcal A}$ to $0$ we end up with the PDE in \eqref{eq:MultiDPDEg} for $g$. Moreover, the proof of Proposition \ref{lem:FPP.Exists.Parab.PDE} reveals that the function $u$ associated with $g$ via $g(t,y)=u(t,y)^q$ solves the problem \eqref{problem.for.Phi} with $h(\cdot)=\int_{\mathbb{S}_{\mathcal{L}}(D)}\Psi(\zeta,\cdot)\,\nu (\dif \zeta)$. At this point, the identity \eqref{eq:Thm.FPP.Construct.Sol} follows from Theorem \ref{Prop:GeneralizedWidder}. Finally, the characterization \eqref{eq:Thm.FPP.Construct.OptPort} of the optimal portfolios is a direct consequence of \eqref{eq:Thm.FPP.Sufficient.OptPortfolio} and \eqref{eq:Thm.FPP.Construct.Sol}.

\medskip

\noindent\textit{(ii).} Arguing as in the second half of the proof of part (i) we deduce that, for any separable power factor form local FPP $\gamma^\gamma\frac{x^{1-\gamma}}{1-\gamma}g(t,Y_t)$ with the initial condition of \eqref{eq:U0}, the function $g$ is a classical solution of the problem \eqref{eq:MultiDPDEg}. The substitution $g(t,y)=u(t,y)^q$ and Theorem \ref{Prop:GeneralizedWidder} show the necessity and sufficiency of the representation \eqref{eq:Thm.FPP.Construct.Condition}. We conclude as in the second half of the proof of part (i).
\end{proof}

\section{Proof of Theorem \ref{Thm.Suff.Cond} and further ramifications}\label{sec:Elliptic.Eigenfunctions}

\subsection{Proof of Theorem \ref{Thm.Suff.Cond}}

We start our analysis of the pairing $(\Psi,\nu)$ by establishing Theorem \ref{Thm.Suff.Cond}.

\begin{proof}[Proof of Theorem \ref{Thm.Suff.Cond}]
Let $D'\subset D$ be a bounded subdomain with a $C^3$ boundary $\partial D'\subset D$ and $\psi:\,D'\to[0,1]$ be a thrice continuously differentiable function with compact support in $D'$. Then, by \cite[Chapter IV, Theorem 5.2]{Lady} the problem
\begin{equation}
\partial_t u_{D'} + \mathcal{L} u_{D'}=0\;\;\text{on}\;\;[-\varepsilon,0]\times D',\quad u_{D'}|_{[-\varepsilon,0]\times \partial D'}=0,\quad u_{D'}(0,\cdot)=h\psi
\end{equation}
(posed in the ``right'' time direction) has a unique classical solution with $\eta$-H\"older continuous $\partial_t u_{D'}$, $\partial_{y_iy_j} u_{D'}$ in the $y$ variable, $\frac{\eta}{2}$-H\"older continuous $\partial_t u_{D'}$, $\partial_{y_iy_j} u_{D'}$ in the $t$ variable, and $\frac{1+\eta}{2}$-H\"older continuous $\partial_{y_i} u_{D'}$ in the $t$ variable. In particular, $u_{D'}$ obeys the Feynman-Kac formula
\begin{equation}
u_{D'}(-t,y)=\E\Big[e^{\int_0^t P(Z_s)\,\mathrm{d}s}\,(h\psi)(Z_t)\,\mathbf{1}_{\{\tau_{D'}>t\}}\,\Big|\,Z_0=y\Big],\quad (t,y)\in[0,\varepsilon]\times D',
\end{equation}
where $\tau_{D'}$ is the first exit time of $Z$ from $D'$.

\medskip

Using the described construction for a sequence of subdomains $D'$ and functions $\psi$ increasing to $D$ and $\mathbf{1}_D$, respectively, we arrive at the monotone limit
\begin{equation}
u(-t,y)=\E\Big[e^{\int_0^t P(Z_s)\,\mathrm{d}s}\,h(Z_t)\,\mathbf{1}_{\{\tau_D>t\}}\,\Big|\,Z_0=y\Big],\quad (t,y)\in[0,\varepsilon]\times D
\end{equation}
of $u_{D'}$, which is locally bounded on $[0,\varepsilon]\times D$ by assumption. Thanks to this and the local regularity estimate in \cite[Chapter IV, Theorem 10.1]{Lady} we can extract a subsequence of $u_{D'}$ converging uniformly together with $\partial_t u_{D'}$, $\partial_{y_i} u_{D'}$, and $\partial_{y_iy_j} u_{D'}$ on every fixed set $[-\varepsilon,0]\times D'$. Thus, $u$ is a classical solution of the problem \eqref{what is u}.

\medskip

Now, assume $h=\int_{\mathbb{S}_{\mathcal{L}}(D)} \Psi(\zeta,\cdot)\,\nu(\dif\zeta)$. In view of \cite[Chapter 4, Theorem 3.2 and Exercise 4.16]{pinsky1995positive} (see also Section \ref{subsec:PrelimPositiveEigen} for more details), the elements of $\mathbb{S}_{\mathcal{L}}(D)$ are bounded below, so that the function $\tilde{u}(t,y)=\int_{\mathbb{S}_{\mathcal{L}}(D)} e^{-\zeta t}\,\Psi(\zeta,y)\,\nu(\dif\zeta)$ is well-defined on $[0,\infty)\times D$. By Theorem \ref{Prop:GeneralizedWidder}, the function $\tilde{u}$ is a classical solution of
\begin{equation}\label{sys:Thm.Suff.Cond.Proof.Parab.WrongTime}
\partial_t\tilde{u}+\mathcal{L}\tilde{u}=0\;\;\text{on}\;\;\{(0,y_0)\}\cup ((0,\infty)\times D).
\end{equation}
Moreover, the function
\begin{equation}\label{sys:Thm.Suff.Cond.Proof.Define.v}
v(t,y) =
\begin{cases}
u(t,y)\;\;\text{for}\;\;(t,y)\in[-\varepsilon,0]\times D, \\
\tilde{u}(t,y)\;\;\text{for}\;\;(t,y)\in(0,\infty)\times D
\end{cases}
\end{equation}
is a classical solution of the PDE $\partial_t v + \mathcal{L} v = 0$ on $[-\varepsilon,\infty)\times D$. Indeed, on the set $([-\varepsilon,0)\cup(0,\infty))\times D$ this PDE holds by construction, whereas
\begin{equation}
\partial_t \tilde{u}(0,y)=\lim_{t\downarrow0}\,\partial_t\tilde{u}(t,y)=- \lim_{t\downarrow0}\,\mathcal{L}\tilde{u}(t,y)=-\mathcal{L}\tilde{u}(0,y),\quad y\in D
\end{equation}
by the interior Schauder estimate of \cite[Theorem 6.2]{nadtochiy2015optimal}.

\medskip

The Harnack's inequality in \cite[Chapter VII, Corollary 7.42]{Lieberman} enables us to apply Theorem \ref{Prop:GeneralizedWidder} to the function
\begin{equation}
\tilde{v}:\,\{(0,y_0)\}\cup ((0,\infty)\times D)\to(0,\infty),\quad (t,y)\mapsto\frac{v(t-\varepsilon,y)}{v(-\varepsilon,y_0)}
\end{equation}
and find a Borel probability measure $\tilde{\nu}$ on $\mathbb{S}_{\mathcal{L}}(D)$ and a $\tilde{\nu}$-Bochner integrable selection of positive eigenfunctions $\tilde{\Psi}:\,\mathbb{S}_{\mathcal{L}}(D)\times D\to(0,\infty)$ such that
\begin{equation}
\tilde{v}(t,y)=\int_{\mathbb{S}_{\mathcal{L}}(D)} e^{-t\zeta}\,\tilde{\Psi}(\zeta,y)\,\tilde{\nu}(\mathrm{d}\zeta).
\end{equation}
In particular, for $(t,y)\in(0,\infty)\times D$,
\begin{equation}\label{eq:Thm.Suff.Cond.Proof.NewIntegral.v}
\int_{\mathbb{S}_{\mathcal{L}}(D)} e^{-(t+\varepsilon)\zeta}\,\tilde{\Psi}(\zeta,y)\,\tilde{\nu}(\mathrm{d}\zeta)
=\tilde{v}(t+\varepsilon,y)
=\frac{v(t,y)}{v(-\varepsilon,y_0)}
=\frac{\int_{\mathbb{S}_{\mathcal{L}}(D)} e^{-\zeta t}\,\Psi(\zeta,y)\,\nu(\dif\zeta)}{v(-\varepsilon,y_0)}.
\end{equation}
Plugging in first $y=y_0$, then $y\in D\backslash\{y_0\}$, and relying on the uniqueness of the Laplace transform (see \cite[Section 30]{billingsley2012probability}) we read off $\tilde{\nu}(\mathrm{d}\zeta)=\frac{e^{\varepsilon\zeta}}{v(-\varepsilon,y_0)}\,\nu(\mathrm{d}\zeta)$ and $\tilde{\Psi}=\Psi$ from \eqref{eq:Thm.Suff.Cond.Proof.NewIntegral.v}. Hence, for $(t,y)\in(-\varepsilon,0]\times D$,
\begin{equation}
\begin{split}
u(t,y)=v(t,y)=v(-\varepsilon,y_0)\,\tilde{v}(t+\varepsilon,y)
=v(-\varepsilon,y_0)\,\int_{\mathbb{S}_{\mathcal{L}}(D)} e^{-(t+\varepsilon)\zeta}\,\tilde{\Psi}(\zeta,y)\,\tilde{\nu}(\mathrm{d}\zeta) \\
=\int_{\mathbb{S}_{\mathcal{L}}(D)} e^{-\zeta t}\,\Psi(\zeta,y)\,\nu(\dif\zeta),
\end{split}
\end{equation}
as desired. In the special case of $y=y_0$, we obtain \eqref{eq:Lapl_conn}.
\end{proof}

\subsection{Preliminaries on positive eigenfunctions}\label{subsec:PrelimPositiveEigen}


As a preparation for the proofs of Propositions \ref{Prop.Dim1.Critical}, \ref{Prop.Dim2.One}, \ref{Prop.Dim1.Multi} and \ref{Prop.DimInfty.Multi},  we recall some facts about the sets $\mathbb{S}_{\mathcal{L}}(D)$ and $C_{\mathcal{L}-\zeta}(D)$, $\zeta\in\mathbb{S}_{\mathcal{L}}(D)$ from positive harmonic function theory. Throughout the subsection we let $\mathcal{L}$ satisfy Assumption \ref{Assump.Holder}.

\begin{definition}[Green's measure] Consider the solution $Z$ of the generalized martingale problem on $D$ associated with $\mathcal{L}_0=\mathcal{L}-P(y)$ (see \cite[Section 1.13]{pinsky1995positive}). If
\begin{equation}\label{eq:Green.def}
D'\mapsto\E\bigg[\int_0^\infty e^{\int_0^t P(Z_s)\,\mathrm{d}s}\,\mathbf{1}_{D'}(Z_t)\,\mathrm{d}t\,\bigg|\,Z_0=y\bigg]<\infty
\end{equation}
for all bounded subdomains $D'\subset D$ with $\overbar{D'}\subset D$ and $y\in D$, then the positive Borel measure defined by \eqref{eq:Green.def} is called a Green's measure for $\mathcal{L}$ on $D$. The density $G(y,z)$ of a Green's measure $G(y,\cdot)$, if it exists, is referred to as a Green's function.
\end{definition}

By \cite[Chapter 4, Theorem 3.1 and Exercise 4.16]{pinsky1995positive} for the operators $\mathcal{L}-\zeta$, $\zeta\in\rr$, we have the next proposition.

\begin{proposition}\label{Prop:GreensFunc.SolExist}
If $\zeta\in\rr$ is such that a Green's function exists for $\mathcal{L}-\zeta$, then $C_{\mathcal{L}-\zeta}(D)\neq\emptyset$.
\end{proposition}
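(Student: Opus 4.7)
The plan is to construct a positive eigenfunction directly from the Green's function via a Martin-type limiting procedure. Let $G_\zeta(\cdot,\cdot)$ denote the Green's function of $\mathcal{L}-\zeta$ on $D$ whose existence is hypothesized. For any fixed pole $z\in D\setminus\{y_0\}$ the function $y\mapsto G_\zeta(y,z)$ is a classical positive solution of $(\mathcal{L}-\zeta)u=0$ on $D\setminus\{z\}$; in particular, on every relatively compact subdomain $D'\subset D$ with $\overline{D'}\subset D$ not containing $z$, this function is a bona fide positive classical solution of the elliptic equation.

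First I would pick a sequence $(z_k)_{k\ge 1}\subset D$ that eventually leaves every compact subset of $D$ (for instance, a sequence converging to $\partial D$ in the one-point compactification) and consider the normalized Martin kernels
\begin{equation}
K_k(y) = \frac{G_\zeta(y,z_k)}{G_\zeta(y_0,z_k)},\qquad y\in D\setminus\{z_k\}.
\end{equation}
By construction $K_k(y_0)=1$ and $(\mathcal{L}-\zeta)K_k=0$ on $D\setminus\{z_k\}$. Second, on any fixed compact set $K\subset D$ we have $z_k\notin K$ for all $k$ large, so $K_k$ is a positive classical solution of $(\mathcal{L}-\zeta)u=0$ on a neighbourhood of $K$. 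The Harnack inequality of \cite[Chapter VII, Corollary 7.42]{Lieberman} (applicable because Assumption \ref{Assump.Holder} holds locally uniformly) together with the normalization $K_k(y_0)=1$ yields two-sided bounds $c(K)\le K_k\le C(K)$ on $K$ uniformly in large $k$.

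Third, interior Schauder estimates for second-order linear elliptic operators with locally H\"older continuous coefficients upgrade the uniform $L^\infty$ bounds to uniform $C^{2,\eta}$ bounds on compact subsets. An Arzel\`a-Ascoli / diagonal extraction argument then produces a subsequence (not relabeled) converging in $C^2_{\mathrm{loc}}(D)$ to a nonnegative limit $\phi\in C^{2,\eta}(D)$ that satisfies $(\mathcal{L}-\zeta)\phi=0$ throughout $D$ and obeys $\phi(y_0)=1$. The strong maximum principle (equivalently, another application of Harnack's inequality to the nonnegative solution $\phi$ with $\phi(y_0)>0$) upgrades $\phi\ge 0$ to $\phi>0$ on $D$, so that $\phi\in C_{\mathcal{L}-\zeta}(D)$ and the set is nonempty.

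The main obstacle I anticipate is ensuring that the limit $\phi$ is nondegenerate and that the singularities $z_k$ of the prelimit functions do not cause trouble on the way. Both issues are handled by the choice to send $z_k$ out of every compact set (so on any fixed compact window the prelimits are genuine solutions), combined with the normalization at $y_0$ and Harnack, which pins the values away from $0$ and $\infty$ uniformly. Once those ingredients are in place the remaining compactness and regularity arguments are standard, and one obtains the desired eigenfunction by this Martin-type construction, which is precisely the content of \cite[Chapter 4, Theorem 3.1 and Exercise 4.16]{pinsky1995positive}.
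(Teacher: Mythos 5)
Your proposal is correct, and it is essentially the argument the paper leans on: the paper proves this proposition by citing \cite[Chapter 4, Theorem 3.1 and Exercise 4.16]{pinsky1995positive}, whose proof is exactly the Martin-type construction you reconstruct (normalize the Green's function at $y_0$, let the pole $z_k$ escape every compact set, use Harnack plus interior Schauder for local compactness, and pass to a subsequential $C^2_{\mathrm{loc}}$ limit that is a positive eigenfunction). You even identify this yourself in your final sentence, so there is no genuine difference in approach to report.
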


We proceed to the corresponding classification of the operators $\mathcal{L}-\zeta$, $\zeta\in\rr$.

\begin{definition}
An operator $\mathcal{L}-\zeta$ on $D$ is described as
\begin{enumerate}[(i)]
\item subcritical if it possesses a Green's function,
\item critical if it is not subcritical, but $C_{\mathcal{L}-\zeta}(D)\neq\emptyset$,
\item and supercritical if it is neither critical nor subcritical.
\end{enumerate}
\end{definition}

Thus, we are interested in the values of $\zeta$ for which $\mathcal{L}-\zeta$ is subcritical or critical, that is, $\zeta\in \mathbb{S}_{\mathcal{L}}(D)$. As it turns out, $\mathbb{S}_{\mathcal{L}}(D)$ is a half-line under Assumption \ref{Assump.Holder}.

\begin{proposition}[\cite{pinsky1995positive}, Chapter 4, Theorem 3.2 and Exercise 4.16]\label{Prop:Eigenvalue.Spectrum.Structure}
There exists a critical eigenvalue $\zeta_c=\zeta_c(D)\in\mathbb{R}$ such that $\mathcal{L}-\zeta$ is subcritical for $\zeta>\zeta_c$, supercritical for $\zeta<\zeta_c$, and either critical or subcritical for $\zeta = \zeta_c$.
\end{proposition}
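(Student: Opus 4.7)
The plan is to define $\zeta_c(D):=\inf\{\zeta\in\rr:\mathcal{L}-\zeta \text{ is subcritical on } D\}$ and prove that (a) this set is a right half-line, (b) its infimum is finite, and (c) the infimum belongs to $\mathbb{S}_{\mathcal{L}}(D)$. The backbone of the argument is a monotonicity principle: if $\zeta_1<\zeta_2$ and $C_{\mathcal{L}-\zeta_1}(D)\neq\emptyset$, then $\mathcal{L}-\zeta_2$ is subcritical. Indeed, any $\psi\in C_{\mathcal{L}-\zeta_1}(D)$ satisfies $(\mathcal{L}-\zeta_2)\psi=-(\zeta_2-\zeta_1)\psi<0$, so $\psi$ is a positive strict supersolution of $\mathcal{L}-\zeta_2$. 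Passing to the $\psi$-transformed operator $\mathcal{L}^\psi=\frac{1}{\psi}(\mathcal{L}-\zeta_2)\psi$ (which has zero potential plus the strictly negative quantity $-(\zeta_2-\zeta_1)$ absorbed into the potential), the associated diffusion is killed at a strictly positive rate, so a standard Feynman--Kac computation yields a finite Green's measure for $\mathcal{L}^\psi$, hence for $\mathcal{L}-\zeta_2$.

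First I would verify that $\zeta_c(D)$ is finite. By Assumption~\ref{Assump.Holder}(iii), $P$ is globally bounded, so for $\zeta>\sup_D P$ the Feynman--Kac representation
\[
\E\!\left[\int_0^\infty e^{\int_0^t(P(Z_s)-\zeta)\,\mathrm{d}s}\mathbf{1}_{D'}(Z_t)\,\mathrm{d}t\,\Big|\,Z_0=y\right]\le \frac{|D'|}{\zeta-\sup_D P}
\]
is finite on every bounded $D'\Subset D$, so $\mathcal{L}-\zeta$ is subcritical for all such $\zeta$ and $\zeta_c(D)<\infty$. Boundedness below of $\zeta_c(D)$ is the content of the monotonicity: once one has the above monotonicity in hand, the set of subcritical parameters is an interval of the form $(\zeta_c(D),\infty)$ or $[\zeta_c(D),\infty)$, and $\zeta_c(D)>-\infty$ follows from the observation that for $\zeta$ sufficiently negative, $(\mathcal{L}-\zeta)\mathbf{1}=P-\zeta>0$, forcing supercriticality by the maximum principle applied to any candidate Green's function.

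For the strict dichotomy in the supercritical regime, suppose towards contradiction that $C_{\mathcal{L}-\zeta'}(D)\neq\emptyset$ for some $\zeta'<\zeta_c(D)$. Then the monotonicity principle applied with $\zeta_1=\zeta'$ and any $\zeta_2\in(\zeta',\zeta_c(D))$ would show $\mathcal{L}-\zeta_2$ is subcritical, contradicting the definition of $\zeta_c(D)$. Hence $C_{\mathcal{L}-\zeta'}(D)=\emptyset$ for all $\zeta'<\zeta_c(D)$, which is precisely supercriticality.

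The most delicate step, and the one I expect to be the main obstacle, is showing that $\zeta_c(D)\in\mathbb{S}_{\mathcal{L}}(D)$, i.e.~that the infimum is attained. I would take a sequence $\zeta_n\downarrow\zeta_c(D)$ with $C_{\mathcal{L}-\zeta_n}(D)\neq\emptyset$ and pick representatives $\psi_n\in C_{\mathcal{L}-\zeta_n}(D)$ normalized by $\psi_n(y_0)=1$. By Harnack's inequality \cite[Chapter VII, Corollary 7.42]{Lieberman} applied on an exhausting sequence of subdomains with $y_0$ in their interiors, the $\psi_n$ are uniformly bounded above and below away from zero on every compact $K\subset D$. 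Interior Schauder estimates for the elliptic equation $(\mathcal{L}-\zeta_n)\psi_n=0$, combined with the locally $\eta$-H\"older regularity of the coefficients from Assumption~\ref{Assump.Holder}(ii), furnish local $C^{2,\eta}$ bounds that are uniform in $n$. A diagonal extraction then produces a subsequence converging in $C^2_{\mathrm{loc}}(D)$ to a limit $\psi\in C^{2,\eta}(D)$ with $\psi(y_0)=1$, $\psi>0$, and $(\mathcal{L}-\zeta_c(D))\psi=0$. Thus $\zeta_c(D)\in\mathbb{S}_{\mathcal{L}}(D)$, and by definition $\mathcal{L}-\zeta_c(D)$ is either critical or subcritical, completing the proof.
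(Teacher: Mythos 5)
The paper cites this result from \cite[Chapter 4, Theorem 3.2 and Exercise 4.16]{pinsky1995positive} and offers no proof of its own, so your attempt is a genuinely independent derivation. The monotonicity argument via the $\psi$-transform, the Feynman--Kac bound giving $\zeta_c(D)<\infty$ (though the factor $|D'|$ in your display is spurious — the correct bound is simply $1/(\zeta-\sup_D P)$ since $\mathbf{1}_{D'}\le 1$), and the Harnack--Schauder compactness argument for attainment at the infimum are all sound and match the standard technique.

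There is, however, a genuine gap in the step proving $\zeta_c(D)>-\infty$. You claim that for $\zeta<\inf_D P$ the inequality $(\mathcal{L}-\zeta)\mathbf{1}=P-\zeta>0$ ``forces supercriticality by the maximum principle applied to any candidate Green's function.'' This is false: the existence of a positive strict \emph{sub}solution does not preclude subcriticality. A concrete counterexample: let $\mathcal{L}=\tfrac{1}{2}\partial_{yy}$ (so $P\equiv 0$) on $D=(0,1)$ and take $\zeta=-1<0=\inf_D P$. Then $(\mathcal{L}-\zeta)\mathbf{1}=1>0$, yet $\psi(y)=\sin(\sqrt{2}\,y+\phi)$ with any $\phi\in(0,\pi-\sqrt{2}\,)$ is a positive eigenfunction, and in fact $\zeta_c((0,1))=-\pi^2/2<-1$, so $\mathcal{L}+1$ is subcritical. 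The threshold $\inf_D P$ is simply the wrong quantity: $\zeta_c(D)$ depends on the geometry of $D$ and on the diffusion and drift coefficients, not on $P$ alone. The correct route is domain-based. Fix a ball $B$ with $\overbar{B}\subset D$; any $\psi\in C_{\mathcal{L}-\zeta}(D)$ restricts to a positive eigenfunction on $B$, so $\mathbb{S}_{\mathcal{L}}(D)\subseteq\mathbb{S}_{\mathcal{L}}(B)$ and hence $\zeta_c(D)\ge\zeta_c(B)$. Finiteness of $\zeta_c(B)$ for a bounded smooth ball then needs an actual argument — comparison of a putative positive eigenfunction against the Dirichlet principal eigenfunction of $\mathcal{L}-\zeta$ on $B$, showing none can exist once $\zeta$ drops below the (finite) Dirichlet principal eigenvalue — which is precisely what the cited results in Pinsky supply. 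Since your compactness step begins with ``take $\zeta_n\downarrow\zeta_c(D)$,'' it presupposes $\zeta_c(D)>-\infty$, so the gap is load-bearing and not cosmetic.
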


When the potential $P$ is non-positive, more information about the classification of the operator $\mathcal{L}$ is available.

\begin{proposition}[\cite{pinsky1995positive}, Chapter 4, Theorem 3.3]\label{Prop:Operator.Subcritical.BoundedPotential}
For an operator $\mathcal{L}$ with $P \leq 0$ one of the following holds:
\begin{enumerate}[(i)]
\item $P\leq 0,\, P \not \equiv 0$, and $\mathcal{L}$ is subcritical,
\item $P \equiv 0,$ the solution of the generalized martingale problem on $D$ associated with $\mathcal{L}$ is transient, and $\mathcal{L}$ is subcritical,
\item $P \equiv 0,$ the solution of the generalized martingale problem on $D$ associated with $\mathcal{L}$ is recurrent, and $\mathcal{L}$ is critical.
\end{enumerate}
\end{proposition}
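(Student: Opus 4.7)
The proposition is a standard classification result from the theory of positive solutions of second-order elliptic operators, and I would follow the approach in Pinsky's monograph. The plan is to organize the argument around two observations: first, whenever $P \equiv 0$ the constant function $1$ solves $\mathcal{L} 1 = 0$ and hence lies in $C_{\mathcal{L}}(D)$, so $\mathcal{L}$ is automatically at worst critical; second, the Green's measure admits the Feynman--Kac representation
\begin{equation*}
G(y, D') = \int_0^\infty \E\Big[e^{\int_0^t P(Z_s)\,\mathrm{d}s}\,\mathbf{1}_{D'}(Z_t)\,\Big|\,Z_0=y\Big]\,\mathrm{d}t,
\end{equation*}
which under local uniform ellipticity of $\mathcal{L}_0$ lifts to an actual Green's function via standard parabolic regularity. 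Subcriticality is then equivalent to finiteness of this integral for all bounded subdomains $D' \subset D$ with $\overbar{D'} \subset D$.

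For cases (ii) and (iii), where $P \equiv 0$, the integrand above reduces to $\E[\mathbf{1}_{D'}(Z_t) \mid Z_0 = y]$, and integrating in $t$ yields the expected total occupation time of $D'$ by $Z$. This quantity is finite precisely when $Z$ is transient, giving subcriticality in (ii), and infinite when $Z$ is recurrent, ruling out subcriticality in (iii); combined with $1 \in C_{\mathcal{L}}(D)$, case (iii) then lands in the critical class.

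Case (i), with $P \leq 0$ and $P \not\equiv 0$, is the substantive one. I would split it according to whether $\mathcal{L}_0 = \mathcal{L} - P$ is subcritical ($Z$ transient) or critical ($Z$ recurrent). In the transient subcase, the pointwise bound $e^{\int_0^t P(Z_s)\,\mathrm{d}s} \leq 1$ immediately gives $G(y, D') \leq G_0(y, D') < \infty$, so $\mathcal{L}$ is subcritical. In the recurrent subcase, continuity of $P$ together with $P \not\equiv 0$ produces an open ball $B \subset D$ and a constant $c > 0$ with $-P \geq c\,\mathbf{1}_B$; by recurrence and the strong Markov property, the excursion structure of $Z$ around $B$ should force the Feynman--Kac integrand to decay geometrically in $t$, making the integral finite.

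The main obstacle is making this recurrent subcase of (i) quantitative: the heuristic that $Z$ spends a positive density of time inside $B$ must be upgraded to an exponential decay bound on $\E[e^{\int_0^t P(Z_s)\,\mathrm{d}s}]$. I would stop $Z$ at successive entry and exit times from $B$ and apply the strong Markov property to factor the expectation across excursions; local uniform ellipticity of $\mathcal{L}_0$ (via a Harnack-type lower bound on hitting probabilities of $B$) together with uniform positivity of $-P$ on $B$ bounds each per-excursion factor strictly below $1$, and the resulting geometric decay propagates through the product over excursions to yield integrability of the Feynman--Kac integral and hence a Green's function for $\mathcal{L}$.
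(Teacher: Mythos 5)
The paper does not prove this proposition: it is cited verbatim as Chapter~4, Theorem~3.3 of Pinsky's monograph \cite{pinsky1995positive}, so there is no in-paper argument to compare against. Your reconstruction therefore has to stand on its own, and the overall architecture (Feynman--Kac representation of the Green's measure, trichotomy according to whether $P \equiv 0$ and whether the driftless process $Z$ is transient or recurrent) is indeed the right one: cases (ii), (iii), and the transient subcase of (i) are handled correctly by the occupation-time and pointwise-domination arguments.

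The genuine gap is in the recurrent subcase of (i). The claim that the Feynman--Kac integrand ``decays geometrically in $t$'' is false for null-recurrent diffusions, and the excursion-counting mechanism you sketch does not deliver it. Take $\mathcal{L}_0 = \tfrac12\,\mathrm{d}^2/\mathrm{d}y^2$ on $\mathbb{R}$ with $P = -c\,\mathbf{1}_{[-1,1]}$: the occupation time $L(t)$ of $[-1,1]$ scales like $\sqrt{t}$, and since the process can wander away from $B$ for times of order $t$ with probability of order $t^{-1/2}$, one has $\E_0\big[e^{\int_0^t P(Z_s)\,\mathrm{d}s}\big] \asymp t^{-1/2}$, which is \emph{not} integrable over $[1,\infty)$. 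Your excursion decomposition cannot produce exponential decay because the Harnack-type lower bound you invoke controls the hitting probability of $B$ (trivially $1$ by recurrence) but not the hitting \emph{time}, and for null-recurrent processes the number of completed excursions up to time $t$ is random with heavy lower tails. What actually saves the integral $\int_0^\infty \E_y\big[e^{\int_0^t P(Z_s)\,\mathrm{d}s}\,\mathbf{1}_{D'}(Z_t)\big]\,\mathrm{d}t$ is the conditioning forced by the indicator $\mathbf{1}_{D'}(Z_t)$: requiring $Z_t$ back near $B$ correlates strongly with a large occupation time, and the joint estimate decays like $t^{-3/2}$ in the example above — polynomial, integrable, but not geometric. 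Your sketch never engages with this indicator and so the mechanism, as stated, would fail. To close the gap you either need a Brownian-bridge / conditioned-diffusion estimate that handles the indicator and Feynman--Kac factor jointly, or you should route case (i) through the positive-supersolution characterization of subcriticality (note $v \equiv 1$ satisfies $\mathcal{L}v = P \leq 0$, $P \not\equiv 0$), which is what Pinsky's development in Chapter~4 is built around.
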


\begin{remark}
When $\gamma>1$, the potential term in \eqref{L_coeff} is non-positive. This, put together with Proposition \ref{Prop:Operator.Subcritical.BoundedPotential}, yields $0 \in \mathbb{S}_{\mathcal{L}}$. Thus, $[0,\infty)\subset \mathbb{S}_{\mathcal{L}}$ by Proposition \ref{Prop:Eigenvalue.Spectrum.Structure}.
\end{remark}

\subsection{Proofs of Propositions \ref{Prop.Dim1.Critical}, \ref{Prop.Dim2.One}, \ref{Prop.Dim1.Multi} and \ref{Prop.DimInfty.Multi}}\label{subsec:ProofPropositionsEigenDim}

At this point, we can read off Propositions \ref{Prop.Dim1.Critical}, \ref{Prop.Dim2.One}, and \ref{Prop.Dim1.Multi} from appropriate results in \cite{murata1986structure} and \cite{pinsky1995positive}.

\begin{proof}[Proof of Proposition \ref{Prop.Dim1.Critical}]
By Propositions \ref{Prop:GreensFunc.SolExist} and \ref{Prop:Eigenvalue.Spectrum.Structure},
\begin{equation}
\inf\big\{\zeta\in\rr:\,
\zeta \in \mathbb{S}_{\mathcal{L}}(D)\big\}=\zeta_c(D)\in \mathbb{S}_{\mathcal{L}}(D).
\end{equation}
If $P$ is constant and the solution of the generalized martingale problem on $D$ for $\mathcal{L}-P$ is recurrent, then $\mathcal{L}-P$ is critical by Proposition \ref{Prop:Operator.Subcritical.BoundedPotential}, and hence, $\zeta_c(D)=-P$. In this case, \cite[Chapter 4, Theorem 3.4]{pinsky1995positive} yields
$|C_{\mathcal{L}-\zeta_c(D)}(D)|=1$.
\end{proof}

\begin{proof}[Proof of Proposition \ref{Prop.Dim2.One}]
It suffices to put together Proposition \ref{Prop:Eigenvalue.Spectrum.Structure} with \cite[Chapter 4, Remark 2 on p.~149, Theorem 3.4, and Exercise 4.16]{pinsky1995positive}.
\end{proof}


\begin{proof}[Proof of Proposition \ref{Prop.Dim1.Multi}]
Note that, for any $\zeta\ge\zeta_c(\rr^k)$ and $f \in C_{\mathcal{L}-\zeta}$, one has $\phi f\in C_{\tilde{\mathcal{L}}-\zeta}$. Therefore, it is enough to prove $|C_{\tilde{\mathcal{L}} - \zeta}| = 1$, $\zeta>\zeta_c(\rr^k)$, which is readily obtained by combining Proposition \ref{Prop:Eigenvalue.Spectrum.Structure} with \cite[Theorem 5.3]{murata1986structure}.
\end{proof}



In the context of Proposition \ref{Prop.DimInfty.Multi}, the structure of the sets $C_{\mathcal{L}-\zeta}(D)$, $\zeta>\zeta_c(D)$ has been described in \cite[Theorems 6.1 and 6.3]{Ancona}, which we briefly recall for the convenience of the reader.

\begin{definition}[Minimal eigenfunction]
A function $\psi\in C_{\mathcal{L}-\zeta}(D)$ is referred to as minimal if $\tilde{\psi}\le\psi$ implies $\tilde{\psi}=\psi$ for all $\tilde{\psi}\in C_{\mathcal{L}-\zeta}(D)$.
\end{definition}

\begin{proposition}[\cite{Ancona}, Theorems 6.1 and 6.3] \label{Ancona}
In the setting of Proposition \ref{Prop.DimInfty.Multi}, every minimal element $\psi\in C_{\mathcal{L}-\zeta}(D)$ has the property $\lim_{z\to y} \psi(z)>0$ for exactly one point $y\in\partial D$ and is uniquely determined by $y$. In addition, for every $\psi\in C_{\mathcal{L}-\zeta}(D)$, there exists a unique Borel probability measure $\xi$ on $\partial D$ such that
\begin{equation}\label{eq:Martin_repr}
\psi(\cdot)=\int_{\partial D} \psi_y(\cdot)\,\xi(\mathrm{d}y),
\end{equation}
where $\psi_y$ is the minimal function associated with $y$.
\end{proposition}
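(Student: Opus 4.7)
Since $\zeta>\zeta_c(D)$, Proposition \ref{Prop:Eigenvalue.Spectrum.Structure} tells us that $\mathcal{L}-\zeta$ is subcritical, hence admits a positive Green's function $G(y,z)$ on $D\times D$. The plan is to follow the classical strategy for identifying the Martin boundary: construct the Martin kernel $K(y,z):=G(y,z)/G(y_0,z)$, pass to boundary limits as $z\to y^*\in\partial D$, identify the minimal part of the Martin boundary with $\partial D$, and then invoke the abstract Martin representation theorem for positive $(\mathcal{L}-\zeta)$-harmonic functions.

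First I would establish a boundary Harnack principle for $\mathcal{L}-\zeta$ on the Lipschitz domain $D$: if $u,v$ are positive solutions of $(\mathcal{L}-\zeta)w=0$ in $D\cap B_r(y^*)$ that vanish continuously on $\partial D\cap B_r(y^*)$, then $u/v$ extends continuously (and in fact H\"older continuously) to $\partial D\cap B_{r/2}(y^*)$, with a quantitative comparison independent of $u,v$. Using this tool, for each $y\in\partial D$ one shows that $K(\,\cdot\,,z)\to\psi_y(\cdot)$ as $z\to y$ in $D$, with $\psi_y\in C_{\mathcal{L}-\zeta}(D)$, $\psi_y(y_0)=1$, and $\lim_{z\to y'}\psi_y(z)=0$ for every $y'\in\partial D\setminus\{y\}$ (by applying the boundary Harnack principle to the pair $(\psi_y,G(\,\cdot\,,z_0))$ near $y'$). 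This produces one candidate minimal eigenfunction per boundary point and proves that the map $y\mapsto\psi_y$ is injective.

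Next I would verify the minimality and uniqueness claim. Suppose $\tilde\psi\in C_{\mathcal{L}-\zeta}(D)$ with $\tilde\psi\le\psi_y$. Then $\tilde\psi/\psi_y$ is a bounded positive quotient of $(\mathcal{L}-\zeta)$-harmonic functions on $D$. A boundary Fatou-type theorem (again a consequence of the boundary Harnack principle combined with the construction of $\mathcal{L}$-harmonic measure on $\partial D$) yields nontangential boundary limits of $\tilde\psi/\psi_y$ at $\omega$-almost every point of $\partial D$, where $\omega$ is harmonic measure for $\mathcal{L}-\zeta$ at $y_0$. Since $\psi_y$ vanishes at every boundary point except $y$ while $\tilde\psi$ is finite there, the only way $\tilde\psi/\psi_y$ can remain bounded is for it to be constant, proving $\tilde\psi=c\,\psi_y$, hence minimality. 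A parallel argument shows that if $\lim_{z\to y}\psi(z)>0$ at two distinct points $y,y'\in\partial D$ for a minimal $\psi$, one can peel off a positive multiple of $\psi_{y'}$ strictly below $\psi$, contradicting minimality; this yields the ``exactly one point'' statement.

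For the integral representation, I would compactify $D$ by the Martin compactification obtained from $K$, apply the abstract Choquet-type theorem of Doob/Martin to conclude that every $\psi\in C_{\mathcal{L}-\zeta}(D)$ has a unique representation $\psi(\cdot)=\int_{\Delta_{\min}}K(\,\cdot\,,\omega)\,\xi(\mathrm d\omega)$ over the minimal Martin boundary $\Delta_{\min}$, and then transport $\xi$ to $\partial D$ along the homeomorphism $y\mapsto\psi_y$ built in the previous steps to obtain \eqref{eq:Martin_repr}; uniqueness of $\xi$ on $\partial D$ follows from uniqueness on $\Delta_{\min}$. The main obstacle throughout is the boundary Harnack principle for general locally uniformly elliptic operators with H\"older coefficients on Lipschitz domains, which is the deep contribution of \cite{Ancona}; once that tool is in place, the minimality, uniqueness at a single boundary point, and the Martin representation unfold through the standard Martin boundary machinery.
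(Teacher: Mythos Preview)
The paper does not prove this proposition: it is quoted verbatim from \cite{Ancona} (Theorems~6.1 and~6.3) and used as a black box to deduce Proposition~\ref{Prop.DimInfty.Multi}. So there is no ``paper's own proof'' to compare against; your sketch is really an outline of Ancona's argument, and at that level it is on target---the boundary Harnack principle for uniformly elliptic operators on Lipschitz domains is indeed the core analytic input, after which the identification of the minimal Martin boundary with $\partial D$ and the abstract Martin representation follow by standard potential-theoretic machinery.

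That said, two of your intermediate arguments are imprecise as written. Your minimality proof for $\psi_y$ via a Fatou-type theorem is not how this is normally carried out: the quotient $\tilde\psi/\psi_y$ being bounded by $1$ does not by itself force it to be constant, and invoking boundary limits against harmonic measure is circular here because harmonic measure is itself built from the Martin kernels. The usual route is to show, via the boundary Harnack principle, that any two positive $(\mathcal{L}-\zeta)$-harmonic functions vanishing on $\partial D\setminus\{y\}$ are comparable near $y$, hence proportional on all of $D$ by a Harnack chain plus maximum principle argument. Similarly, your ``peel off a positive multiple of $\psi_{y'}$'' step for the exactly-one-point claim presupposes that some $c\psi_{y'}\le\psi$, which is not automatic; one instead argues that a minimal function must arise as a single Martin kernel, and the boundary Harnack principle then pins the associated boundary point down uniquely.
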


Proposition \ref{Prop.DimInfty.Multi} is a direct consequence of Proposition \ref{Ancona}.

\begin{proof}[Proof of Proposition \ref{Prop.DimInfty.Multi}]
The uniqueness of the Borel probability measure $\xi$ in the representation \eqref{eq:Martin_repr} shows that the extreme points of $C_{\mathcal{L}-\zeta}(D)$ are precisely the minimal functions $\psi_y$, $y\in\partial D$. Clearly, $|\{\psi_y:\,y\in\partial D\}|=|\partial D|=\infty$.
\end{proof}

\section{Merton problem in stochastic factor models}\label{sec:BackwardCase}

In this section, we consider the framework of the Merton problem, in which an investor aims to maximize her expected terminal utility from the wealth acquired through investment:
\begin{equation}
\sup_{\pi\in\mathcal{A}}\,\E[\upsilon_T(X^\pi_T,Y_T)].
\end{equation}
Thereby, the time horizon $T$ and the utility function $\upsilon_T$ are chosen once and for all at time zero.  It is well-known (see e.g. \cite[Section IV.3]{FleSon}) that the dynamic programming equation for the Merton problem within the Markovian diffusion model \eqref{eq:Dynamic.Stock}, \eqref{eq:Dynamic.Factor} takes the shape of the HJB equation
\begin{equation}\label{eq:Merton.DynProg.PDE.HJB}
 \partial_t V + \mathcal{L}_y V - \frac{1}{2}\frac{|\lambda\,\partial_x
  V + \rho\kappa\,\partial_x\nabla_y V|^2}{\partial_{xx}V} = 0.
\end{equation}
In contrast to the preceding discussion, here the HJB equation is equipped with a terminal condition $V(T,\cdot,\cdot)=\upsilon_T$ and, hence, posed in the backward (``right'') time direction. It turns out that, under Assumption \ref{assump.Corr}, we can reduce the backward problem to a linear second order parabolic PDE posed in the ``right'' time direction, provided that the terminal utility function is of separable power form: $\upsilon_T(x,y)=\gamma^\gamma\frac{x^{1-\gamma}}{1-\gamma} g_T(y)$, and that appropriate technical assumptions hold.

\begin{theorem}\label{Prop:Backward.ValueFunction}
Let $\gamma\in(0,1)$. Suppose the market model \eqref{eq:Dynamic.Stock}, \eqref{eq:Dynamic.Factor}, the correlation matrix $\rho$, and the linear elliptic operator of the second order $\mathcal{L}$ with the coefficients
\begin{equation}\label{L_coeff'}
a(\cdot) = \kappa(\cdot)^T\kappa(\cdot),\quad
b(\cdot) = \alpha(\cdot) + \Gamma\kappa(\cdot)^T\rho^T\lambda(\cdot),\quad
P(\cdot) = \frac{\Gamma}{2q}\lambda(\cdot)^T\lambda(\cdot)
\end{equation}
satisfy the Assumptions \ref{assump.Var}, \ref{assump.Corr}, and \ref{Assump.Holder}, respectively, where
$\Gamma =\frac{1-\gamma}{\gamma}$ and $q = \frac{1}{1+\Gamma p}$. Suppose further that the Sharpe ratio $\lambda(\cdot)$ is bounded, the weak solution $Z$ of the SDE associated with $\mathcal{L}_0=\mathcal{L}-P(y)$ remains in $D$, and the terminal utility function is of separable power form $\upsilon_T(x,y)=\gamma^\gamma\frac{x^{1-\gamma}}{1-\gamma} h(y)^q$, with an $h\in C^{2,\eta}(D)$ bounded above and below by positive constants and such that
\begin{equation}
(t,y)\mapsto\nabla_y\,\E\big[e^{\int_{0}^{t} P(Z_s)\,\mathrm{d}s}\,h(Z_t)\,\big|\,Z_0=y\big]
\end{equation}
is bounded on $[0,T]\times D$. Then, the value function for the corresponding Merton problem $V(t, x, y) = \sup_{\pi\in\mathcal{A}}\,\E[\upsilon_T(X^\pi_T,Y_T)\,|\, X^\pi_T =x,\,Y_T =y]$ can be written as
\begin{equation}\label{V_from_u}
V(t,x,y)=\gamma^{\gamma} \frac{x^{1-\gamma}}{1-\gamma}u(t,y)^q.
\end{equation}
Hereby, $u$ is a classical solution of the linear PDE problem
\begin{equation}
\partial_t u + \mathcal{L} u=0\;\;\text{on}\;\;[0,T]\times D,\quad u(T,\cdot)=g_T.
\end{equation}
Moreover, every portfolio allocation $\pi^*$ fulfilling
\begin{equation}\label{eq:Prop:Backward.ValueFunction.OptPortfolio}
\sigma(Y_t) \pi_t^{*} =
\frac{1}{\gamma}\bigg(\lambda(Y_t)+q\rho\kappa(Y_t)\frac{\nabla_y u(t,Y_t)}{u(t,Y_t)}\bigg)
\end{equation}
is optimal.
\end{theorem}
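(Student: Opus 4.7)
The plan is to mirror the strategy used for the forward case in Proposition \ref{lem:FPP.Exists.Parab.PDE}, but to run it in the backward direction and supplement it with a verification argument that turns the candidate function $\tilde V(t,x,y)=\gamma^\gamma\frac{x^{1-\gamma}}{1-\gamma}u(t,y)^q$ into the true value function. Specifically, substituting the ansatz $V=\gamma^\gamma\frac{x^{1-\gamma}}{1-\gamma}g(t,y)$ into the HJB equation \eqref{eq:Merton.DynProg.PDE.HJB} and then setting $g=u^q$ gives, by exactly the same algebra as in the proof of Proposition \ref{lem:FPP.Exists.Parab.PDE} (the non-linear gradient term drops out because of Assumption \ref{assump.Corr} and the choice $q=\frac{1}{1+\Gamma p}$), the linear problem $\partial_t u+\mathcal{L}u=0$ on $[0,T]\times D$ with $u(T,\cdot)=h$ and $\mathcal{L}$ as in \eqref{L_coeff'}.

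Next I would construct $u$ via Feynman--Kac. Since $Z$ stays in $D$, $P$ is bounded and $\eta$-H\"older continuous on bounded subdomains, and $h\in C^{2,\eta}(D)$ is bounded above and below by positive constants, the formula
\begin{equation}
u(t,y)=\E\Big[e^{\int_{t}^{T}P(Z_s)\,\mathrm{d}s}\,h(Z_T)\,\Big|\,Z_t=y\Big]
\end{equation}
defines a positive function on $[0,T]\times D$ which is bounded above and below by positive constants. Classical interior parabolic regularity (\cite[Chapter IV, Theorem 5.2]{Lady} applied on an exhausting sequence of smooth subdomains, followed by the local estimate \cite[Chapter IV, Theorem 10.1]{Lady}, exactly as in the proof of Theorem \ref{Thm.Suff.Cond}) yields a classical solution of the stated linear PDE. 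The assumption that $\nabla_y\E[e^{\int_0^{t}P(Z_s)\,\mathrm{d}s}h(Z_t)\,|\,Z_0=y]$ is bounded on $[0,T]\times D$ translates, via a time reversal $s=T-t$, into boundedness of $\nabla_y u$ on $[0,T]\times D$; combined with the lower bound on $u$, this gives a bounded $\nabla_y u/u$, which is what one needs for admissibility and integrability downstream.

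For the verification step I would apply It\^o's formula to $\tilde V(t,X^\pi_t,Y_t)$. Using that $u$ solves $\partial_t u+\mathcal{L}u=0$ and that the distortion calculation makes $\tilde V$ a classical solution of the HJB equation \eqref{eq:Merton.DynProg.PDE.HJB}, the computation at the end of the proof of Proposition \ref{Thm.FPP.Sufficient} shows that the drift of $\tilde V(t,X^\pi_t,Y_t)$ is the negative of an expression of the form \eqref{eq:drift.coeff}, hence non-positive for every $\pi\in\mathcal{A}$ and identically zero precisely when $\pi=\pi^*$ is given by \eqref{eq:Prop:Backward.ValueFunction.OptPortfolio}. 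Hence $\tilde V(t,X^\pi_t,Y_t)$ is a local supermartingale for all $\pi\in\mathcal{A}$ and a local martingale at $\pi^*$. To promote these to true (super)martingale statements I would use the bounds on $\lambda$, on $u$ from above and below, and on $\nabla_y u$: since $\gamma\in(0,1)$ gives $1-\gamma\in(0,1)$, the function $x\mapsto x^{1-\gamma}$ is bounded by $1+x$, so standard $L^p$ estimates on $X^\pi$ from \eqref{eq:Dynamic.Wealth} (using Doob and the admissibility condition \eqref{eq:AdmissibilitySet}) combined with the bound $|\nabla_y u/u|\le C$ allow a localisation argument together with Fatou's lemma to upgrade the local supermartingale to a true supermartingale and, along $\pi^*$, to a true martingale. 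Taking expectations and passing to the terminal condition then yields
\begin{equation}
\tilde V(t,x,y)\ge\E\big[\upsilon_T(X^\pi_T,Y_T)\,\big|\,X^\pi_t=x,\,Y_t=y\big]
\end{equation}
with equality at $\pi^*$, which is the assertion \eqref{V_from_u} together with the optimality of \eqref{eq:Prop:Backward.ValueFunction.OptPortfolio}.

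I expect the main obstacle to be the verification step rather than the PDE reduction, which is essentially a copy of the forward computation. The delicate point is to justify that the candidate optimal feedback $\pi^*$ in \eqref{eq:Prop:Backward.ValueFunction.OptPortfolio} is admissible in the sense of Definition \ref{defn:AdmissibilitySet} and that $\tilde V(t,X^{\pi^*}_t,Y_t)$ is a genuine martingale (rather than merely local). This is precisely the role of the two extra hypotheses in the theorem: the boundedness of $\lambda$ and the boundedness of $\nabla_y\E[e^{\int_0^{t}P(Z_s)\mathrm{d}s}h(Z_t)\,|\,Z_0=y]$ are exactly what is needed to bound the feedback coefficient in \eqref{eq:Prop:Backward.ValueFunction.OptPortfolio} and to control the $L^2$ norms of the diffusion coefficients of $\tilde V(\cdot,X^{\pi^*}_\cdot,Y_\cdot)$, closing the verification argument.
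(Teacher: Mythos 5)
Your overall structure matches the paper's: reduce the HJB equation to the linear problem for $u$ via the distortion transformation $g=u^q$ (running Proposition~\ref{lem:FPP.Exists.Parab.PDE} backward), build $u$ by Feynman--Kac plus interior parabolic regularity as in the proof of Theorem~\ref{Thm.Suff.Cond}, and then verify by It\^o that the candidate $V$ is a supermartingale along any $\pi\in\mathcal{A}$ and a martingale along $\pi^*$.

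The weak link is the verification step, where the specific mechanisms you invoke do not quite work. For the supermartingale property you appeal to ``standard $L^p$ estimates on $X^\pi$ using Doob and the admissibility condition \eqref{eq:AdmissibilitySet}''. But \eqref{eq:AdmissibilitySet} only ensures almost sure finiteness of certain pathwise integrals; it gives no moment bounds on $X^\pi$ for general $\pi\in\mathcal{A}$, so Doob/$L^p$ has nothing to bite on. The reason $\gamma\in(0,1)$ matters is simpler and this is how the paper closes the gap: it makes $V\ge0$, so the local-martingale part of $V(t,X^\pi_t,Y_t)$ is bounded below by $-V(0,x,y)$ (a local martingale bounded below is a supermartingale), and adding the non-increasing drift part gives a true supermartingale. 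Your Fatou remark is in the right spirit (positivity of $V$ is exactly what makes Fatou applicable), but the rest of the sentence about $x^{1-\gamma}\le 1+x$, Doob and admissibility is a red herring.

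For the martingale property along $\pi^*$, Fatou only gives you one inequality and cannot upgrade a local martingale to a martingale; one needs uniform integrability or an $L^2$ bound on the stochastic integral. The paper makes this concrete: it computes the quadratic variation of the local-martingale part explicitly, uses boundedness of $\sigma(Y_s)\pi^*_s$ and of $u^{q-1}\kappa\nabla_y u$ (which follow from the boundedness of $\lambda$, $\kappa$, $u$ above and below, and $\nabla_y u$), and then shows $\sup_{t\in[0,T]}\E[(X^{\pi^*}_t)^{2-2\gamma}]<\infty$ from the exponential formula for $X^{\pi^*}$ with bounded coefficients. You gesture at these ingredients but do not actually execute the $L^2$ estimate, which is the step that the two boundedness hypotheses in the theorem are there to support. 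Fill this in and the argument coincides with the paper's.
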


\begin{proof}
By the classical verification paradigm (see e.g. \cite[Chapter IV, proof of Theorem 3.1]{FleSon}), it is enough to show that for every portfolio allocation $\pi\in\mathcal{A}$ the process $V(t,X^{\pi}_t,Y_t)$, $t\in[0,T]$ is a supermartingale, and that for every solution $\pi^*$ of \eqref{eq:Prop:Backward.ValueFunction.OptPortfolio} the process $V(t,X^{\pi^*}_t,Y_t)$, $t\in[0,T]$ is a martingale.

\medskip

We follow the proof of Proposition \ref{lem:FPP.Exists.Parab.PDE} in the reverse direction and find that $g(t,y):=u(t,y)^q$ is a classical solution of the problem \eqref{eq:MultiDPDEg}, whereas the function $V$ defined by \eqref{V_from_u} is a classical solution of the HJB equation \eqref{eq:Merton.DynProg.PDE.HJB} with $V(T,\cdot,\cdot)=\upsilon_T$. For any $\pi\in\mathcal{A}$, we may now apply It\^o's formula to $V(t,X^{\pi}_t,Y_t)$ and replace $\partial_t V+\mathcal{L}_y V$ by $\frac{1}{2}\frac{|\lambda\,\partial_x
V + \rho\kappa\,\partial_x\nabla_y V|^2}{\partial_{xx}V}$ to see that the drift coefficient of $V(t,X^{\pi}_t,Y_t)$ is the negative of the expression in \eqref{eq:drift.coeff} and, in particular, non-positive. Hence, the local martingale part of $V(t,X^{\pi}_t,Y_t)$ is bounded below by $-V(0,x,y)$ and, consequently, a supermartingale. Thus, $V(t,X^{\pi}_t,Y_t)$ is a supermartingale as well.

\medskip

Next, we deduce from the proof of Theorem \ref{Thm.Suff.Cond} that $u(t,y)$ admits the stochastic representation
\begin{equation}
u(t,y)=\E\Big[e^{\int_{0}^{T-t} P(Z_s)\,\mathrm{d}s}\,h(Z_{T-t})\,\Big|\,Z_0=y\Big]
\end{equation}
(recall that $Z$ remains in $D$ by assumption). In addition, our further assumptions imply that $\nabla_y u$ is bounded on $[0,T]\times D$, and that $u$ is bounded above and below by positive constants on $[0,T]\times D$. Together with the boundedness of the Sharpe ratio $\lambda(\cdot)$ and of $\kappa(\cdot)$ (see Assumption \ref{Assump.Holder}(iii)) this yields the boundedness of $\sigma(Y_t) \pi_t^{*}$ via \eqref{eq:Prop:Backward.ValueFunction.OptPortfolio}. Finally, the drift coefficient of $V(t,X^{\pi^*}_t,Y_t)$ vanishes and the quadratic variation of its local martingale part computes to
\begin{equation}
\begin{split}
\int_0^t
\gamma^{2\gamma}\,(X^{\pi^*}_s)^{2-2\gamma}\,|\sigma(Y_s)\pi^*_s|^2
+\frac{\gamma^{2\gamma}q^2}{(1-\gamma)^2}\,(X^{\pi^*}_s)^{2-2\gamma}\,u(s,Y_s)^{2q-2}\,|\kappa(Y_s)\nabla_y u(s,Y_s)|^2 \\
+\frac{2\gamma^{2\gamma}q}{1-\gamma}\,(X^{\pi^*}_s)^{2-2\gamma}\,u(s,Y_s)^{q-1}\,(\sigma(Y_s)\pi^*_s)^T\rho\kappa(Y_s)\nabla_y u(s,Y_s)\,\mathrm{d}s.
\end{split}
\end{equation}
The expectation of the latter integral is finite for all $t\in[0,T]$, since $\sigma(Y_s)\pi^*_s$ and $u(s,Y_s)^{q-1}\kappa(Y_s)\nabla_y u(s,Y_s)$ are bounded, while $\sup_{t\in[0,T]} \E[(X^{\pi^*}_t)^{2-2\gamma}]<\infty$ thanks to the boundedness of $\sigma(Y_s)\pi^*_s$ and $\lambda(Y_s)$ in
\begin{equation}
X^{\pi^*}_t=x\,\exp\bigg(\int_0^t (\sigma(Y_s)\pi^*_s)^T\lambda(Y_s)\,\mathrm{d}s+\int_0^t (\sigma(Y_s)\pi^*_s)^T\,\mathrm{d}W_s-\frac{1}{2}\int_0^t |\sigma(Y_s)\pi^*_s|^2\,\mathrm{d}s\bigg).
\end{equation}
We conclude that $V(t,X^{\pi^*}_t,Y_t)$ is a true martingale.
\end{proof}

\section{Discussion of EVE assumption}\label{sec: corr matr}

This last section is devoted to a thorough investigation of Assumption \ref{assump.Corr} that plays a key role in the proof of Theorem \ref{Thm.FPP.Construct}. It is instructive to start with the two extreme cases corresponding to taking $p=1$ and $p=0$ therein, respectively. Suppose first that $A=0$ in \eqref{eq:BM}, in other words, the components of the Brownian motion $B$ driving the factors are given by linear combinations of the components of the Brownian motion $W$ driving the stock prices. We can then reparametrize the model such that $B=W$, $\rho=I_{d_W}$, and $\rho^T\rho=I_{d_W}$. Consequently, Assumption \ref{assump.Corr} holds with $p=1$. The resulting market is complete, and we find ourselves in the framework of \cite[Section 2.3]{nadtochiy2015optimal}. It is therefore not surprising that the HJB equation \eqref{eq:Thm.FPP.Sufficient.PDE.HJB} can be reduced to a linear PDE, even though the linearization in Proposition \ref{lem:FPP.Exists.Parab.PDE} differs from the one in \cite[Section 2.3]{nadtochiy2015optimal}. On the other hand, when $\rho=0$ in \eqref{eq:BM}, the Brownian motions $B$ and $W$ become independent, leading to an incomplete market. Nonetheless, Assumption \ref{assump.Corr} is still satisfied with $p=0$. Thus, the linearization in Proposition \ref{lem:FPP.Exists.Parab.PDE} goes far beyond the complete market setup.

\medskip

More generally, Assumption \ref{assump.Corr} can be put to use as follows. In practice, the correlation matrix $\rho$ can have hundreds or thousands of entries, and hence, might be difficult to estimate accurately in its entirety. However, one can attempt to obtain a less noisy estimate by projecting an estimate for $\rho$ onto the submanifold of $d_W\times d_B$ matrices fulfilling Assumption \ref{assump.Corr}. Restricting the attention to the non-trivial case $d_W\ge d_B$ (see Remark \ref{rmk:EVE_disc}), with the exception of the zero matrix, the latter matrices can be written uniquely as $rQ$, where $r\in(0,1]$ and $Q$ has orthonormal columns, thereby forming a $\big(1+\frac{d_W(d_W-1)}{2}-\frac{(d_W-d_B)(d_W-d_B-1)}{2}\big)$-dimensional submanifold of $\rr^{d_W\times d_B}$. As it turns out, the most tractable projection onto this submanifold is that with respect to the Frobenius norm (also known as the Hilbert-Schmidt norm) on $\rr^{d_W\times d_B}$.


\subsection{Choice of $r$ and $Q$}

Let us equip the space $\rr^{d_W\times d_B}$ with the Frobenius norm:
\begin{equation}
|A|_F = \bigg(\sum_{i=1}^{d_W} \sum_{j=1}^{d_B} a_{ij}^2\bigg)^{1/2}
=\big(\mathrm{trace}\,A^TA\big)^{1/2}.
\end{equation}
For an estimate $\widehat{\rho}$ of $\rho$, we are able to find a constant $r$ and a matrix with orthonormal columns $Q$ that minimize the distance defined by the Frobenius norm.
\begin{proposition}
Consider the minimization problem
\begin{equation}
\min |\widehat{\rho}-rQ|_F\;\;\textrm{such that}\;\;r\in[0,1],\;\;Q^TQ=I_{d_B}.
\end{equation}
Then, $r^* = \frac{\mathrm{trace} (\widehat{\rho}^T\widehat{\rho})^{1/2}}{d_B}$ and $Q^*= \widehat{\rho}(\widehat{\rho}^T\widehat{\rho})^{-1/2}$ are the minimizers.
\end{proposition}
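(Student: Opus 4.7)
The plan is to reduce the problem to an orthogonal Procrustes subproblem together with a one-dimensional quadratic optimization. First, I would expand the squared Frobenius norm using the constraint $Q^TQ=I_{d_B}$:
\begin{equation}
|\widehat{\rho}-rQ|_F^2=\mathrm{trace}(\widehat{\rho}^T\widehat{\rho})-2r\,\mathrm{trace}(Q^T\widehat{\rho})+r^2 d_B.
\end{equation}
Only the middle term couples $r$ and $Q$, so for any fixed $r\ge 0$ the optimal $Q$ is the maximizer of $\mathrm{trace}(Q^T\widehat{\rho})$ over $d_W\times d_B$ matrices with orthonormal columns, independently of $r$.

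Second, I would dispatch this Procrustes-type subproblem via the thin singular value decomposition $\widehat{\rho}=U_1\Sigma V^T$, where $U_1$ is $d_W\times d_B$ with orthonormal columns, $V$ is $d_B\times d_B$ orthogonal, and $\Sigma=\mathrm{diag}(\sigma_1,\ldots,\sigma_{d_B})$ collects the singular values. Set $M:=V^TQ^TU_1\in\rr^{d_B\times d_B}$; by cyclicity, $\mathrm{trace}(Q^T\widehat{\rho})=\mathrm{trace}(M\Sigma)=\sum_{i=1}^{d_B}\sigma_i M_{ii}$. The crucial bound is $M_{ii}^2\le (MM^T)_{ii}=|U_1^TQVe_i|^2\le 1$, since $U_1U_1^T$ is an orthogonal projection in $\rr^{d_W}$ and $|QVe_i|=1$ by orthonormality of the columns of $Q$ and $V$. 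This yields $\mathrm{trace}(Q^T\widehat{\rho})\le\sum_i\sigma_i=\mathrm{trace}(\widehat{\rho}^T\widehat{\rho})^{1/2}$, with equality forcing $M=I_{d_B}$ and hence $U_1^TQ=V^T$; combined once more with $|Qe_i|=1$, this pins down $Q^*=U_1V^T$. A direct computation using $(\widehat{\rho}^T\widehat{\rho})^{-1/2}=V\Sigma^{-1}V^T$ then identifies $U_1V^T=\widehat{\rho}(\widehat{\rho}^T\widehat{\rho})^{-1/2}$.

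Third, substituting this optimum back reduces the problem to the scalar minimization of $r^2 d_B-2r\,\mathrm{trace}(\widehat{\rho}^T\widehat{\rho})^{1/2}$ over $r\in[0,1]$, whose unconstrained minimizer is $r^*=\frac{\mathrm{trace}(\widehat{\rho}^T\widehat{\rho})^{1/2}}{d_B}$; feasibility $r^*\in[0,1]$ follows provided the estimate $\widehat{\rho}$ inherits the singular value bound $\sigma_i\in[0,1]$ assumed for $\rho$ in the main setup (since then $\sum_i\sigma_i\le d_B$), which is a natural consistency requirement for any estimator of a correlation matrix.

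The main obstacle I anticipate is the possibility that $\widehat{\rho}$ is rank-deficient, in which case $(\widehat{\rho}^T\widehat{\rho})^{-1/2}$ has to be interpreted via the Moore-Penrose pseudoinverse. The SVD argument still goes through after padding $\Sigma$ with zeros, but the equality condition $\sigma_i M_{ii}=\sigma_i$ only constrains those columns of $Q$ corresponding to $\sigma_i>0$; the optimal value and $r^*$ are unchanged, while $Q^*$ becomes unique only up to the kernel of $\widehat{\rho}^T$, and the stated formula $Q^*=\widehat{\rho}(\widehat{\rho}^T\widehat{\rho})^{-1/2}$ (now read with pseudoinverse) still selects a valid minimizer.
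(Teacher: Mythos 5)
Your proposal is correct and follows a genuinely different route from the paper. The paper fixes $r$, treats the minimization over $\widetilde{Q}=rQ$ as a constrained problem with constraint $\widetilde{Q}^T\widetilde{Q}=r^2 I_{d_B}$, applies the method of Lagrange multipliers to derive the first-order condition $\widetilde{Q}(2I_{d_B}-\Lambda-\Lambda^T)=\widehat{\rho}$, then squares it to solve for the multiplier and obtain $\widetilde{Q}=r\widehat{\rho}(\widehat{\rho}^T\widehat{\rho})^{-1/2}$, and finally minimizes over $r$. You instead expand the squared norm to isolate the bilinear term $\mathrm{trace}(Q^T\widehat{\rho})$ and solve the resulting orthogonal Procrustes subproblem via the thin SVD $\widehat{\rho}=U_1\Sigma V^T$, establishing the sharp upper bound $\mathrm{trace}(Q^T\widehat{\rho})\le\sum_i\sigma_i$ with an explicit equality analysis that pins down $Q^*=U_1V^T=\widehat{\rho}(\widehat{\rho}^T\widehat{\rho})^{-1/2}$. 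Your approach is arguably more complete: the Lagrange-multiplier calculation in the paper only produces a stationarity condition and does not in itself certify that the critical point is a global minimum, whereas your SVD argument yields the global bound directly. It also makes the dependence on the singular values transparent, which is what actually drives both the formula for $r^*$ and the feasibility check $r^*\le 1$. The paper gains brevity but loses a sufficiency argument and the geometric picture.

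One small inaccuracy in your closing remark: in the rank-deficient case, the formula $\widehat{\rho}(\widehat{\rho}^T\widehat{\rho})^{-1/2}$ with the Moore--Penrose pseudoinverse evaluates to $U_1\Sigma\Sigma^{+}V^T$, whose rank equals $\mathrm{rank}(\widehat{\rho})<d_B$, so it does \emph{not} have orthonormal columns and hence is not feasible, let alone a minimizer. The correct statement is that when $\mathrm{rank}(\widehat{\rho})<d_B$ the minimizer $Q^*$ is genuinely non-unique (the columns corresponding to $\sigma_i=0$ are free up to orthonormality) and any completion of the prescribed columns to an orthonormal frame gives an optimizer; the closed-form expression in the proposition simply requires $\widehat{\rho}$ to have full column rank. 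The paper makes the same implicit assumption, so this is a refinement on your part rather than a gap, but the last sentence of your remark should be corrected accordingly.
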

\begin{proof}
Equivalently, consider the problem
\begin{equation}
\min |\widehat{\rho} - \widetilde{Q}|^2_F\;\;\textrm{such that}\;\;\widetilde{Q}^T\widetilde{Q}=r^2 I_{d_B}
\end{equation}
for fixed $r\in[0,1]$ and minimize over $r\in[0,1]$ subsequently. Applying the method of Lagrange multipliers with a $d_B\times d_B$ Lagrange multiplier matrix $\Lambda$ we get
\begin{equation}\label{optimal_Q0}
2(\widetilde{Q}-\widehat{\rho}) = \widetilde{Q}(\Lambda+\Lambda^T) \quad \Longleftrightarrow \quad \widetilde{Q}(2I_{d_B}-\Lambda-\Lambda^T)=\widehat{\rho}.
\end{equation}
Passing to the transpose on both sides of the last equation, taking the product of the resulting equation with the original equation, and recalling the constraint we see
\begin{equation}
r^2(2I_{d_B}-\Lambda-\Lambda^T)^2=\widehat{\rho}^T\widehat{\rho}\quad \Longleftrightarrow\quad
r(2I_{d_B}-\Lambda-\Lambda^T)=(\widehat{\rho}^T\widehat{\rho})^{1/2},
\end{equation}
where $(\widehat{\rho}^T\widehat{\rho})^{1/2}$ is a $d_B\times d_B$ square root of the matrix $\widehat{\rho}^T\widehat{\rho}$. Together with \eqref{optimal_Q0} and the notation $(\widehat{\rho}^T\widehat{\rho})^{-1/2}$ for the inverse of $(\widehat{\rho}^T\widehat{\rho})^{1/2}$ this yields
\begin{equation}\label{optimal_Q}
\widetilde{Q}=r\widehat{\rho}(\widehat{\rho}^T\widehat{\rho})^{-1/2}.
\end{equation}
Plugging the formula for $\widetilde{Q}$ back into the objective function we are left with the minimization problem
\begin{equation}
\min_{r\in[0,1]} \big|\widehat{\rho} - r\widehat{\rho}(\widehat{\rho}^T\widehat{\rho})^{-1/2}\big|^2_F \quad \Longleftrightarrow \quad
\min_{r\in[0,1]} \big(\mathrm{trace} (\widehat{\rho}^T\widehat{\rho})-2r\,\mathrm{trace} (\widehat{\rho}^T\widehat{\rho})^{1/2}+r^2 d_B\big).
\end{equation}
Consequently, the optimal $r$ is $\frac{\mathrm{trace} (\widehat{\rho}^T\widehat{\rho})^{1/2}}{d_B}$, that is, the average of the singular values of $\widehat{\rho}$, whereas $\widetilde{Q}$ should be picked according to \eqref{optimal_Q}.
\end{proof}

\subsection{Choice of $p$}


If one is only interested in the parameter $p$ from Assumption \ref{assump.Corr}, then it is most natural to minimize $|\widehat{\rho}^T\widehat{\rho}-pI_{d_B}|$ for a selection of a norm $|\cdot|$ on $\rr^{d_B\times d_B}$. When $|\cdot|$ is the operator norm (also known as the spectral radius or the Ky Fan $1$-norm),
\begin{equation}
|\widehat{\rho}^T\widehat{\rho}-pI_{d_B}|=\max_{1\le i\le d_B} |\theta_i-p|,
\end{equation}
where $\theta_1\le\theta_2\le\cdots\le\theta_{d_B}$ are the ordered eigenvalues of $\widehat{\rho}^T\widehat{\rho}$ (or, equivalently, the ordered squared singular values of $\widehat{\rho}$). In this case, $|\widehat{\rho}^T\widehat{\rho}-pI_{d_B}|$ is minimized by $p=\frac{\theta_1+\theta_{d_B}}{2}$. When $|\cdot|$ is the Frobenius norm,
\begin{equation}
|\widehat{\rho}^T\widehat{\rho}-pI_{d_B}|=\bigg(\sum_{i=1}^{d_B} |\theta_i-p|^2\bigg)^{1/2},
\end{equation}
which is smallest for $p=\frac{\theta_1+\theta_2+\cdots+\theta_{d_B}}{d_B}$. When $|\cdot|$ is the trace norm (also known as the nuclear norm or the Ky Fan $d_B$-norm),
\begin{equation}
|\widehat{\rho}^T\widehat{\rho}-pI_{d_B}|=\sum_{i=1}^{d_B} |\theta_i-p|.
\end{equation}
The minimizer $p$ for the latter is the median of $\{\theta_1,\theta_2,\ldots,\theta_{d_B}\}$.

\subsection{Example: affine factor models}

We conclude by illustrating the use of the EVE assumption in the framework of affine market models with non-negative factors. In that situation, both the forward investment problem and the Merton problem can be reduced to the solution of a system of Riccati ordinary differential equations (ODEs). Consider the affine specialization of the factor model \eqref{eq:Dynamic.Stock}-\eqref{eq:BM}:
\begin{eqnarray}
&& \frac{\mathrm{d}S^i_t}{S_t^i}=\mu_i(Y_t)\,\mathrm{d}t+\sum_{j=1}^{d_W} \sigma_{ji}(Y_t)\,\mathrm{d}W^j_t,\quad i=1,\,2,\,\ldots,\,n, \label{eq:Dynamic.Stock.Affine} \\
&& \dif Y_t = (M^T Y_t + w) \dif t + \kappa(Y_t)^T\dif B_t,
\label{eq:Dynamic.Factor.Affine} \\
&& B_t = \rho^T W_t + A^T W^{\perp}_t,
\end{eqnarray}
where $M$ has non-negative off-diagonal elements, $w \in [0, \infty)^k$, and  $\mu(\cdot)$, $\sigma(\cdot)$, $\kappa(\cdot)$, $\rho$ are such that
\begin{eqnarray}
&& \lambda(y)^T\lambda(y) = \mu(y)^T \big(\sigma(y)\big)^{-1} \big(\sigma(y)^T\big)^{-1} \mu(y) = \Lambda y + \lambda_0, \label{eq:Affine.Assump.Sharpe} \\
&& \kappa(y)^T\kappa(y) = \mathrm{diag}(L_1y_1,L_2y_2,\ldots,L_k y_k)\;\;\text{with}\;\;L_1,L_2,\ldots,L_k > 0, \label{eq:Affine.Assump.Vol} \\
&& \Gamma \kappa(y)^T \rho^T \lambda(y) = N^T y + c. \label{eq:Affine.Assump.DriftExtra}
\end{eqnarray}
\begin{remark}
The condition \eqref{eq:Affine.Assump.Vol} is necessary for the process $Y$ of \eqref{eq:Dynamic.Factor.Affine} to be $[0,\infty)^k$-valued and affine (see \cite[Theorem 3.2]{filipovic2009affine}). Conversely, the SDE \eqref{eq:Dynamic.Factor.Affine} with volatility and drift coefficients satisfying
\eqref{eq:Affine.Assump.Vol} has a unique weak solution, which is affine and takes values in $[0,\infty)^k$ (see \cite[Theorem 8.1]{filipovic2009affine}).
\end{remark}
Suppose now that the initial utility function for the forward investment problem or the terminal utility function for the Merton problem is of separable power form with $h(y)=\exp(H^Ty+h_0)$. Under the EVE assumption, the HJB equation \eqref{eq:Thm.FPP.Sufficient.PDE.HJB} arising in the two problems can be transformed into the linear second-order parabolic PDE of \eqref{problem.for.Phi} (see the proof of Proposition \ref{lem:FPP.Exists.Parab.PDE}), which in the setting of \eqref{eq:Dynamic.Stock.Affine}-\eqref{eq:Affine.Assump.DriftExtra} amounts to
\begin{equation}\label{eq:Affine.Parab.PDE}
\partial_t u + \frac{1}{2} \sum_{i=1}^{k} L_i y_i\partial_{y_i,y_i}u + y^T (M+N) \nabla_y u + (w + c)^T \nabla_y u + \frac{\Gamma}{2q}(\Lambda^T y+ \lambda_0) u = 0.
\end{equation}
Inserting the exponential-affine ansatz $u(t,y)=\exp(\Phi(t)^Ty+\Theta(t))$ we obtain
\begin{equation}
y^T\dot{\Phi}(t) + \dot{\Theta}(t) + \frac{1}{2}\sum_{i=1}^{k}L_i y_i \Phi_{i}^2 + y^T (M+N) \Phi(t) + (w+c)^T\Phi(t) + \frac{\Gamma}{2q}(\Lambda^T y+\lambda_0)=0.
\end{equation}
Equating the linear and the constant terms in $y$ to $0$ leads to the following system of Riccati ODEs:
\begin{eqnarray}
&&\dot{\Phi}_i(t) + \frac{1}{2} L_i\Phi_i(t)^2 + \sum_{j=1}^{k}(M + N)_{ij} \Phi_j(t) + \frac{\Gamma}{2q}\Lambda_i = 0, \quad i=1,\,2,\,\ldots,\,k, \qquad\qquad \label{sys:Affine.RiccatiODE.Phi} \\
&&\dot{\Theta}(t) + (w + c)^T\Phi(t)  + \frac{\Gamma}{2q}\lambda_0 = 0. \label{eq:Affine.RiccatiODE.Theta}
\end{eqnarray}
We note that $\Theta$ is completely determined by the solution $\Phi$ of the system \eqref{sys:Affine.RiccatiODE.Phi}. The latter can be solved numerically in general and, for special kinds of $M$ and $N$, even explicitly. For example, when $M$ and $N$ are diagonal the system \eqref{sys:Affine.RiccatiODE.Phi} splits into $k$ one-dimensional Riccati ODEs:
\begin{equation}\label{one_dim_ric}
\dot{\Phi}_i(t) + \frac{1}{2} L_i\Phi_i(t)^2 + (M_{ii} + N_{ii}) \Phi_i(t) + \frac{\Gamma}{2q}\Lambda_i = 0, \quad i=1,\,2,\,\ldots,\,k.
\end{equation}
These ODEs can be solved by a separation of variables and subsequent integration. For instance, when $\gamma>1$ and $\Lambda_i>0$ for all $i$, the discriminants
$D_i:=(M_{ii}+N_{ii})^2-L_i\frac{\Gamma }{q}\Lambda_i$ associated with the quadratic equations $\frac{1}{2}L_i z^2 + (M_{ii} + N_{ii}) z + \frac{\Gamma}{2q}\Lambda_i=0$ are positive, resulting in the roots
\begin{equation}
z_{+,i} = \frac{-M_{ii} - N_{ii} + \sqrt{D_i}}{L_i},\quad z_{-,i} = \frac{-M_{ii} - N_{ii}- \sqrt{D_i}}{L_i}.
\end{equation}
The general solution of \eqref{one_dim_ric} then becomes
\begin{equation}
\Phi_i(t) = \frac{z_{+,i}-\chi_i\,z_{-,i}\,e^{-\sqrt{D_i}t}}{1-\chi_i\, e^{-\sqrt{D_i}t}},\quad i=1,\,2,\,\ldots,\,k, \label{eq:Affine.His0.Sol.Phi}
\end{equation}
and one can find the constants $\chi_i$ by setting $\Phi(\cdot)$ to $H$ at time $0$ (for the forward investment problem) or at the terminal time (for the Merton problem).

\medskip

\printbibliography
\end{document}